\newtheorem{thm}{Theorem}[section]
\newtheorem{prop}[thm]{Proposition}
\newtheorem{cor}[thm]{Corollary}
\newtheorem{defn}[thm]{Definition}
\theoremstyle{definition}
\newtheorem{example}[thm]{Example}
\newtheorem{conjecture}[thm]{Conjecture}
\theoremstyle{remark}
\NewDocumentCommand{\binomial}{omm}
{%
	\genfrac(){0pt}{}{#2}{#3}%
	\IfValueT{#1}{_{\!#1}}%
}
\NewDocumentCommand{\eulerian}{omm}
{%
	\genfrac<>{0pt}{}{#2}{#3}%
	\IfValueT{#1}{_{\!#1}}%
}
\def \s {\sigma}
\def\yz#1\yz {{\color{blue} [[YZ: #1]] }}
\def\yzg#1\yzg {{\color{gray} [[YZ: #1]] }}
\def\yzz#1\yzz {{\color{gray} [[To be verified: #1]] }}
\def\ne#1\ne {{\color{green} [[NE: #1]] }}
\title{The CEGM NLSM}
\author[a]{Nick Early,}
\affiliation[a]{Institute for Advanced Study, Princeton, USA.}
\emailAdd{earlnick@ias.edu}
\abstract{
		Studying quantum field theories through geometric principles has revealed deep connections between physics and mathematics, including the discovery by Cachazo, Early, Guevara and Mizera (CEGM) of a generalization of biadjoint scalar amplitudes.  However, extending this to generalizations of other quantum field theories remains a central challenge.  Recently it has been discovered that the nonlinear sigma model (NLSM) emerges after a certain zero-preserving deformation from $\text{tr}(\phi^3)$.  In this work, we find a much richer story of zero-preserving deformations in the CEGM context, yielding generalized NLSM amplitudes.  We prove an explicit formula for the residual embedding of an $n$-point NLSM amplitude in a mixed $n+2$ point generalized NLSM amplitude, which provides a strong consistency check on our generalization.  We show that the dimension of the space of pure kinematic deformations is $\gcd(k,n)-1$, we introduce a deformation-compatible modification of the Global Schwinger Parameterization, and we include a new proof, using methods from matroidal blade arrangements, of the linear independence for the set of planar kinematic invariants for CEGM amplitudes.  Our framework is compatible with string theory through recent generalizations of the Koba-Nielsen string integral to any positive configuration space $X^+(k,n)$, where the usual Koba-Nielsen string integral corresponds to $X(2,n) = \mathcal{M}_{0,n}$. 
}
\begin{document}
	\maketitle
	\addtocontents{toc}{\protect\setcounter{tocdepth}{1}}
	\def \tr {\nonumber\\}
	\def \nn {\nonumber}
	\def \la {|}
	\def \ra {|}
	\def \lan {\langle}
	\def \ran {\rangle}
	\def \dd {\Theta}
	\def\hset{\texttt{h}}
	\def\gset{\texttt{g}}
	\def\sset{\texttt{s}}
	\def \be {\begin{equation}}
		\def \ee {\end{equation}}
	\def \ba {\begin{eqnarray}}
		\def \ea {\end{eqnarray}}
	\def \bg {\begin{gather}}
		\def \eeg {\end{gather}}
	\def \k {\kappa}
	\def \h {\hbar}
	\def \r {\rho}
	\def \l {\lambda}
	\def \be {\begin{equation}}
		\def \en {\end{equation}}
	\def \bes {\begin{eqnarray}}
		\def \ens {\end{eqnarray}}
	\def \red {\color{Maroon}}
	\def \pt {{\rm PT}}
	\def \s {\mathfrak{s}}
	\def \t {\mathfrak{t}}
	\def \v {\mathfrak{v}}
	\def \C {\textsf{C}}
	\def \tp {||}
	\def \p {x}
	\def \x {z}
	\def \V {\textsf{V}}
	\def \ls {{\rm LS}}
	\def \ma {\Upsilon}
	\def \SL {{\rm SL}}
	\def \GL {{\rm GL}}
	\def \w {\omega}
	\def \e {\epsilon}
	
	\numberwithin{equation}{section}

	%%%%%%%%%%%%%%%%%%%%%%%%%%%%%%%%%%%%%
	
\section{Introduction}

In recent years, geometric approaches to scattering amplitudes have not only deepened our conceptual understanding of quantum field theory but also unveiled surprising direct connections between ``toy model'' theories such as the biadjoint scalar or $\text{tr}(\phi^3)$ amplitude, and the realistic physics of pion scattering by way of a kinematical shift.  Similar patterns appear in broader contexts: just as CHY amplitudes can connect $\text{tr}(\phi^3)$ to the NLSM through deformations, it is natural to find a generalization of this connection in the context of a generalization of quantum field theory that we have been uncovering for the past several years, in work of Cachazo-Early-Guevara and Mizera (CEGM)   \cite{Cachazo:2019ngv} on generalized biadjoint scalar amplitudes, and \cite{Arkani-Hamed:2019mrd} on generalized string integrals, now finally making contact with real-world physics.  The main construction of \cite{Arkani-Hamed:2024nhp}, foreshadowed in \cite{Arkani-Hamed:2023lbd}, implemented a kinematic deformation of the $\text{tr}(\phi^3)$ amplitude to discover to leading order in $\delta \rightarrow \infty$ the NLSM amplitude.  We start from the construction of the Koba-Nielsen string integral and build up to CEGM amplitudes, before introducing deformations.

The Koba-Nielsen string \cite{Koba:1969kh} integral is an integral over the positive Grassmannian $G^+(2,n)$ modulo column rescaling
\begin{eqnarray}\label{eq: Koba-Nielsen}
	I^{(2)}_n(s) & = & \alpha'^{(n-3)}\int_{\mathcal{M}^+_{0,n}}\frac{d^{n-3}z}{z_{1,2}z_{2,3}\cdots z_{n,1}}\prod_{i<j}(z_{i,j})^{\alpha' s_{ij}},
\end{eqnarray}
where $z_{i,j} = z_i-z_j$, and the biadjoint scalar partial amplitude \cite{Cachazo:2013hca}, or the $\text{tr}(\phi^3)$ amplitude, is the limit $m^{(2)}_n(s) =\lim_{\alpha' \rightarrow 0} I^{(2)}_n(s).$ 
More generally, for any $(k,n)$, the stringy integral \cite{Arkani-Hamed:2019mrd} is given by 
\begin{eqnarray}
	I^{(k)}_n(s) & = &  (\alpha')^{(k-1)(n-k-1)}\int_{G_+(k,n)\slash T}\omega_{k,n}R_{k,n},
\end{eqnarray}
where 
$$\omega_{k,n} = \Omega(G_+(k,n)\slash T) = \frac{d^{k\times n}}{\text{vol}SL(k)\times GL(1)^n}\frac{1}{p_{12\cdots k}\cdots p_{n12\cdots k-1}}$$
is the differential volume form on the positive configuration space, and $R_{k,n} = \prod_{j_1<\cdots <j_k}p_{j_1\cdots j_k}^{\alpha'\mathfrak{s}_{j_1\cdots j_k}}$
is the regulator and $p_{j_1\cdots j_k}$ are the maximal $k\times k$ Pl\"{u}cker coordinates on the Grassmannian $G(k,n)$.

The standard generalized biadjoint scalar partial amplitude $m^{(k)}(\mathbb{I}_n,\mathbb{I}_n)$, where $\mathbb{I}_n$ is the standard cyclic order \cite{Cachazo:2019ngv}, which can also be viewed as the generalized $\text{tr}(\phi^3)$ amplitude, governs the stringy integral to leading order, as seen in the limit $m^{(k)}_n(\mathfrak{s}) = \lim_{\alpha' \rightarrow 0} I^{(k)}_n(\mathfrak{s}).$
These two limits, giving $m^{(2)}_n$ and $m^{(k)}_n$ for general $k\ge 3$ are computed using respectively the CHY \cite{Cachazo:2013hca} and CEGM formulas \cite{Cachazo:2019ngv},
$$m^{(k)}_n = \sum_{c\in \text{crit}(\mathcal{S}^{(k)}_n)}\frac{1}{\det'H}\left(\frac{1}{p_{12\cdots k}\cdots p_{n12\cdots k-1}}\right)^2,$$
where the sum is over all critical points of the scattering potential $\mathcal{S}^{(k)}_n = \log\left(\prod_{J}p_J^{\mathfrak{s}_J}\right)$
and $\det'H$ is the Hessian determinant of $\mathcal{S}^{(k)}_n$.
	\subsection{The CEGM Nonlinear Sigma Model}
The kinematic deformation of $m^{(2)}_n$ discovered in \cite{Arkani-Hamed:2024nhp} is (assuming $n$ is even) given by 
\begin{eqnarray}\label{eq: delta deformation 2n}
	X_{2i,2j} \mapsto X_{2i,2j} - \delta \text{ and } X_{2i-1,2j-1} \mapsto X_{2i-1,2j-1} + \delta.
\end{eqnarray}
The leading order in the $\delta$-deformation of the biadjoint scalar amplitude $m^{(2)}_n$ is the $n$-point NLSM amplitude.
So for example,
\begin{small}
	$$m^{(2)}_4 = \frac{1}{X_{13}} + \frac{1}{X_{24}} \mapsto \frac{1}{X_{13}+\delta} + \frac{1}{X_{24} - \delta}= -\frac{X_{13} + X_{24}}{\delta^2} + \mathcal{O}(\delta^{-3}),$$
\end{small}
where the coefficient of $\delta^{-2}$ is the 4-point NLSM amplitude.  In this work, we propose a generalization of the NLSM to the CEGM world, from which the NLSM emerges on certain residues.  Let us rewrite Equation \eqref{eq: delta deformation 2n} as $X_{ij} \mapsto X_{ij} + \delta (\vert ij\cap 135\cdots \vert -1)$, i.e., we subtract 1 from the size of the intersection of $\{i,j\}$ with $\{1,3,5,\ldots, \}$.  Just as the NLSM requires that $n$ be divisible by $2$, the \textit{pure} CEGM NLSM requires that $\gcd(k,n)>1$ for nontriviality. 

As we show in Section \ref{sec: two formulas for k propagators}, inverse propagators for CEGM amplitudes $m^{(k)}_n$ possess a natural planar basis \cite{Early:2019eun,Early:2020hap} indexed by $k$-element subsets $X_{j_1\cdots j_k}$, and deformations are particularly elegant in this basis.  The first new CEGM NLSM which we encounter is given by the leading order contribution as $\delta\rightarrow \infty$ of $m^{(3)}_6$, as in Equation \eqref{eq: m36 Intro}, after the generic kinematic shift $\sigma$ defined by 
\begin{eqnarray}\label{eq: generic deformation 36}
	X_{ijk} \mapsto X_{ijk} +\delta(\vert ijk \cap 14 \vert - \vert ijk \cap 25 \vert).
\end{eqnarray}
Applying this to $m^{(3)}_6$ in Equation \eqref{eq: m36 Intro} gives rise to our first CEGM nonlinear sigma model amplitude, where every residue is a triple product of 4-point NLSM amplitudes:
\begin{eqnarray}\label{eq: generic GNLSM Intro A}
	\mathcal{A}^{(3),\sigma}_{6}& = & X_{125} X_{134}+X_{146} X_{134}+X_{124} X_{145}+X_{136} X_{145}+X_{136} X_{235}+X_{145} X_{235}\nonumber\\
	& +& X_{146} X_{236}+X_{245} X_{236}+X_{125} X_{245}+X_{146} X_{245}+X_{124} X_{256}+X_{235} X_{256}\nonumber\\
	& + & X_{124} X_{346}+X_{136} X_{346}+X_{256} X_{346}+X_{125} X_{356}+X_{134} X_{356}+X_{236} X_{356}\nonumber\\
	& -& \frac{\left(X_{124}+X_{236}\right) \left(X_{146}+X_{256}\right) \left(X_{245}+X_{346}\right)}{X_{246}}\nonumber\\
	& - & \frac{\left(X_{125}+X_{136}\right) \left(X_{134}+X_{235}\right) \left(X_{145}+X_{356}\right)}{X_{135}}\\
	& - & \frac{\left(X_{134}+X_{256}\right) \left(X_{125}+X_{346}\right) \left(X_{124}+X_{356}\right)}{X'_{246}}\nonumber\\
	& - & \frac{\left(X_{146}+X_{235}\right) \left(X_{145}+X_{236}\right) \left(X_{136}+X_{245}\right)}{X'_{135}}.\nonumber
\end{eqnarray}
In this work, we explore such objects systematically.  In particular, we shall see that all soft (and hard) limits of $\mathcal{A}^{(3),\sigma}_{6}$ are identically zero.

Returning to the $k=2$ story for a moment shows that the structure of Equation \eqref{eq: generic GNLSM Intro A} is visibly not entirely unfamiliar.  Applying Equation \eqref{eq: delta deformation 2n} to 
\begin{eqnarray*}
	m^{(2)}_6 & = &\frac{1}{X_{14} X_{15} X_{24}}+\frac{1}{X_{15} X_{24} X_{25}}+\frac{1}{X_{24} X_{25} X_{26}}+\frac{1}{X_{13} X_{15} X_{35}}+\frac{1}{X_{15} X_{25} X_{35}}+\frac{1}{X_{13} X_{14} X_{15}}\\
	& + &\frac{1}{X_{13} X_{35} X_{36}}+\frac{1}{X_{26} X_{35} X_{36}}+\frac{1}{X_{13} X_{14} X_{46}}+\frac{1}{X_{14} X_{24} X_{46}}+\frac{1}{X_{24} X_{26} X_{46}}+\frac{1}{X_{25} X_{26} X_{35}}\\
	& + &  \frac{1}{X_{26} X_{36} X_{46}}+\frac{1}{X_{13} X_{36} X_{46}}
\end{eqnarray*}
gives the $n=6$ point NLSM amplitude as the coefficient of $\delta^{-4}$ as $\delta \rightarrow\infty$:
\begin{eqnarray}\label{eq: 6-point NLSM intro}
	\mathcal{A}^{NLSM}_6 & = &  -(X_{13}+X_{15}+X_{24}+X_{26}+X_{35}+X_{46})\nonumber\\
	& + & \frac{\left(X_{15}+X_{26}\right) \left(X_{24}+X_{35}\right)}{X_{25}}+\frac{\left(X_{13}+X_{24}\right) \left(X_{15}+X_{46}\right)}{X_{14}}\\
	& + & \frac{\left(X_{13}+X_{26}\right) \left(X_{35}+X_{46}\right)}{X_{36}}.\nonumber
\end{eqnarray}

Our goal is to generalize this construction to CEGM amplitudes $m^{(k)}_n$.  The first new case is $m^{(3)}_6$, given by the formula
\begin{eqnarray}\label{eq: m36 Intro}
	m^{(3)}_6 = \begin{array}{c}
		\frac{X_{135}+X'_{135}}{X_{135} X_{136} X_{145} X'_{135} X_{235}}+\frac{1}{X_{236} X_{245} X_{246} X_{256}}+\frac{1}{X_{236} X_{246} X_{256} X_{346}}+\frac{X_{246}+X'_{246}}{X_{124} X_{246} X_{256} X'_{246} X_{346}} \\
		+\frac{1}{X_{146} X'_{135} X_{236} X_{245}}+\frac{1}{X_{146} X_{236} X_{245} X_{246}}+\frac{1}{X_{146} X_{236} X_{246} X_{346}}+\frac{1}{X_{145} X_{146} X'_{135} X_{245}} \\
		+\frac{1}{X_{136} X_{146} X'_{135} X_{236}}+\frac{1}{X_{134} X_{136} X_{146} X_{346}}+\frac{1}{X_{136} X_{146} X_{236} X_{346}}+\frac{1}{X_{136} X_{145} X_{146} X'_{135}} \\
		+\frac{1}{X_{125} X_{134} X'_{246} X_{356}}+\frac{1}{X_{125} X_{256} X'_{246} X_{356}}+\frac{1}{X_{136} X_{236} X_{346} X_{356}}+\frac{1}{X_{134} X_{136} X_{145} X_{146}} \\
		+\frac{1}{X_{236} X_{256} X_{346} X_{356}}+\frac{1}{X_{134} X'_{246} X_{346} X_{356}}+\frac{1}{X_{256} X'_{246} X_{346} X_{356}}+\frac{1}{X_{134} X_{136} X_{346} X_{356}} \\
		+\frac{1}{X_{136} X_{235} X_{236} X_{356}}+\frac{1}{X_{125} X_{235} X_{256} X_{356}}+\frac{1}{X_{235} X_{236} X_{256} X_{356}}+\frac{1}{X_{136} X'_{135} X_{235} X_{236}} \\
		+\frac{1}{X'_{135} X_{235} X_{236} X_{245}}+\frac{1}{X_{125} X_{235} X_{245} X_{256}}+\frac{1}{X_{235} X_{236} X_{245} X_{256}}+\frac{1}{X_{145} X'_{135} X_{235} X_{245}} \\
		+\frac{1}{X_{134} X_{135} X_{136} X_{356}}+\frac{1}{X_{125} X_{135} X_{235} X_{356}}+\frac{1}{X_{135} X_{136} X_{235} X_{356}}+\frac{1}{X_{125} X_{145} X_{235} X_{245}} \\
		+\frac{1}{X_{134} X_{135} X_{136} X_{145}}+\frac{1}{X_{125} X_{135} X_{145} X_{235}}+\frac{1}{X_{125} X_{134} X_{135} X_{356}}+\frac{1}{X_{125} X_{134} X_{135} X_{145}} \\
		+\frac{1}{X_{124} X_{245} X_{246} X_{256}}+\frac{1}{X_{124} X_{146} X_{246} X_{346}}+\frac{1}{X_{124} X_{134} X'_{246} X_{346}}+\frac{1}{X_{124} X_{146} X_{245} X_{246}} \\
		+\frac{1}{X_{124} X_{145} X_{146} X_{245}}+\frac{1}{X_{124} X_{125} X_{256} X'_{246}}+\frac{1}{X_{124} X_{134} X_{146} X_{346}}+\frac{1}{X_{124} X_{134} X_{145} X_{146}} \\
		+\frac{1}{X_{124} X_{125} X_{145} X_{245}}+\frac{1}{X_{124} X_{125} X_{245} X_{256}}+\frac{1}{X_{124} X_{125} X_{134} X'_{246}}+\frac{1}{X_{124} X_{125} X_{134} X_{145}} \\
	\end{array}
\end{eqnarray}
where the inverse 3-propagators satisfy $X_{135}+X'_{135} = X_{235} + X_{145} + X_{136}$ and $X_{246} + X'_{246} = X_{124} + X_{256} + X_{346}$.  
	\subsection{Structure of the Paper}

This paper is structured as follows.  Section \ref{sec: geometric deformation theory} introduces the main objects and establishes conventions, and proves that the dimension of the space of pure kinematic shifts is equal to $\gcd(k,n)-1$.  Section \ref{sec: two formulas for k propagators} reviews key properties of the $X$ variables, including a second formula for $X_J$ that aligns with the standard indexing for inverse propagators using bipartitions of $\{1,\ldots, n\}$, and a new proof of their linear independence.
Section \ref{sec: mixed} proves the identification of the $n$-point NLSM amplitude with a certain residue of a mixed $(3,n+2)$ CEGM NLSM amplitude.  Section \ref{sec: hard and soft limits} shows that the pure $(3,6)$ case satisfies an analog of the Adler zero: all soft (and hard) limits vanish identically and conjectures a generalization to all $(3,n)$ when $3$ divides $n$.  Section \ref{sec: discussions} concludes with a summary of results and prospects for future research.

\section{Deformation Theory}\label{sec: geometric deformation theory}

Let us begin by establishing our notation and basic framework.  Let $\lbrack n\rbrack = \{1,\ldots, n\}$.  Denote by $\binom{\lbrack n\rbrack}{k}$ the set of $k$-element subsets of $\lbrack n\rbrack$, and by $\binom{\lbrack n\rbrack}{k}^{ncyc}$ those $k$-element subsets which are not cyclic intervals $\{j,j+1,\ldots, j+k-1\}$.  The \textit{kinematic space} is the subspace $\mathcal{K}_{k,n}$ of $\mathbb{R}^{\binom{n}{k}}$ cut out by the $n$ momentum conservation equations 
$$\sum_{J: J\ni j}\mathfrak{s}_J=0$$
for each $j=1,\ldots, n$, where $\mathfrak{s}_{J}$ is a symmetric tensor with $\mathfrak{s}_{j_1,j_1,\ldots, j_k}=0$ whenever an index is repeated.  It is known (and easy to show by using combinatorial arguments) that momentum conservation is equivalent to imposing that all cyclic $X$-variables, using the formula in Equation \eqref{eq: XJ definition}, vanish identically, i.e. $X_{\lbrack j,j+k-1\rbrack}=0$ for all $j=1,\ldots, n$.

The moduli space $X(k,n)$ is the torus quotient of the open Grassmannian, where all $k\times k$ minors are nonzero.  It has dimension $(k-1)(n-k-1)$, and can also be regarded as the moduli space of $n$ points in $\mathbb{P}^{k-1}$ in general position.  The \textit{positive} configuration space $X^+(k,n)$, first defined in \cite{Arkani-Hamed:2019rds}, is the quotient by $\mathbb{R}_{>0}^n$ of the positive Grassmannian, the subset of the real Grassmannian where all maximal $k\times k$ minors are positive.  
The \textit{positive tropical Grassmannian} \cite{MR2164397,MR2071813} $\text{Trop}^+G(k,n)$ is the chirotropicalization of the real Grassmannian with respect to the all-plus chirotope; for general chirotropical Grassmannians, see \cite{cachazo2024color,Cachazo:2023ltw,Antolini:2024qfr}.  Explicitly, thanks to \cite{Arkani-Hamed:2020cig,MR4241765}, it is the polyhedral fan in $\mathbb{R}^{\binom{n}{k}}$ characterized by the 3-term positive tropical Pl\"{u}cker relations
\begin{eqnarray}\label{eq: positive tropical Plucker relations}
	\pi_{Lac} + \pi_{Lbd} = \min(\pi_{Lab} + \pi_{Lcd},\pi_{Lad} + \pi_{Lbc})
\end{eqnarray}
for each pair $(L,\{a,b,c,d\})$ such that $L \cup \{a,b,c,d\} \in \binom{\lbrack n\rbrack}{k+2}$, with $a<b<c<d$.  But, it contains a large subspace, the lineality space $\text{Lin}_{k,n}$, consisting of all points $\pi \in \mathbb{R}^{\binom{n}{k}}$ with coordinates $\pi_J = \sum_{j\in J}x_j$.  

Finally, denote by $\text{Trop}^+X(k,n) = \text{Trop}^+G(k,n)\slash \text{Lin}_{k,n}$ the \textit{positive tropical moduli space}.  Note that modding out by the lineality space is dual to imposing momentum conservation.  There are many equivalent parameterizations of the positive Grassmannian and the positive configuration space; in this work, we fix the one in Appendix \ref{sec:positive parametrization}.  

Our main question in the study of linear zero preserving deformations has the following simple combinatorial formulation: our task is to compute all possible ways to partition the $n$ cyclic Pl\"{u}cker coordinates
$$p_{12\cdots k},p_{2\cdots k+1},\ldots, p_{n12\cdots k-1}$$
into two sets, such that the ratio of their products is well-defined on the moduli space $X(k,n)$, that is, the factor should be have torus weight zero in each index.  
Let us first review the case $k=2$.  Here there is a single generator for the space of such partitions, defined when $n$ is even:
$$\frac{p_{23}p_{45}\cdots p_{n1}}{p_{12}p_{34}\cdots p_{n-1,n}}.$$
Next, we come to the case $k=3$, in which case we find a two-dimensional space of partitions exactly when $n$ is a multiple of $3$; for example, when $n=6$ it is generated multiplicatively by 
$$\frac{p_{234}p_{156}}{p_{123}p_{456}} \text{ and } \frac{p_{345}p_{126}}{p_{234}p_{156}}.$$
Note that neither of these expressions contain \textit{all} six cyclic minors $p_{i,i+1,i+2}$, but we can construct at least one that does:
$$\left(\frac{p_{234}p_{156}}{p_{123}p_{456}}\right)^2\left(\frac{p_{345}p_{126}}{p_{234}p_{156}}\right) = \frac{p_{234}p_{345}p_{156}p_{126}}{(p_{123}p_{456})^2}.$$
The most general case is when the exponents are generic complex numbers, in which case we can write
$$(p_{123}p_{456})^{z_1}(p_{234}p_{156})^{z_2}(p_{126}p_{345})^{z_3}$$
where the only (linear) relation satisfied by $z_1,z_2,z_3$ is that their sum is zero.  We associate to each such monomial the point in $\sigma(z_1,z_2,z_3)\in \mathcal{K}_{3,6}$ given by 
$$\sigma(z_1,z_2,z_3) = \sum_{j=1}^3 z_j\left(e^{j,j+1,j+2}+e^{j+3,j+4,j+5}\right),$$
i.e. $\mathfrak{s}_{123}=\mathfrak{s}_{456} = z_1$, etc.  We refer to factors involving non-generic (possibly zero) exponents as \textit{special}, and those with parameters such that every cyclic Mandelstam invariant of $\mathfrak{s}_{\lbrack j,j+1\rbrack}$ is nonzero \textit{generic}.  We shall call kinematic shifts where only (a possibly proper subset of) the cyclically consecutive Mandelstams appear \textit{pure}; on the other hand, ratios involving some minors labeled by non-cyclically consecutive subsets may be called \textit{mixed}, as they may give rise to mixed amplitudes as deformations of the biadjoint scalar amplitude $m^{(2)}_n$.  It is exactly the pure kinematic shifts which preserve the (linear) zeros of $m^{(k)}_n$, obtained by setting to zero some of the non-cyclic $\mathfrak{s}_J$'s.

We return to mixed amplitudes in Section \ref{sec: mixed}.

\subsection{Characterization of Pure Kinematic Shifts}
\begin{defn}\label{def: pure kinematic shifts}
	The space of \textit{pure} kinematic shifts is the intersection of $\mathcal{K}_{k,n}$ with the subspace $\text{span}\{e^{\lbrack j,j+k-1\rbrack}:j=1,\ldots, n\} \subset \mathbb{R}^{\binom{n}{k}}$.  
\end{defn}
Our first result is to compute the dimension.  This can be deduced from the dimension of the nullspace of a circulant matrix \cite{ingleton1956rank}; we provide a self-contained proof. 
\begin{prop}
	The dimension of the space of pure kinematic shifts is $\gcd(k,n)-1$.  
\end{prop}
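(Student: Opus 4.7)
The plan is to translate the condition defining a pure kinematic shift into explicit linear equations on the coefficients $c_j$ of $\sum_{j=1}^n c_j e^{[j,j+k-1]}$, and then extract periodicity plus a single trace-type constraint.

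First I would write out momentum conservation at vertex $i$: the condition $\sum_{J\ni i} \mathfrak{s}_J = 0$, when restricted to the cyclic-interval span, becomes
\begin{equation*}
E_i:\quad \sum_{\substack{j=1 \\ i\in[j,j+k-1]}}^{n} c_j \;=\; \sum_{j=i-k+1}^{i} c_j \;=\; 0,
\end{equation*}
where indices are read modulo $n$. Thus the pure kinematic shifts form the solution space of the $n$ equations $E_1,\dots,E_n$ in the $n$ unknowns $c_1,\dots,c_n$.

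Next I would take consecutive differences: $E_{i+1}-E_i = c_{i+1}-c_{i-k+1}=0$, which gives the shift symmetry $c_{j+k}=c_j$ for all $j\in\mathbb{Z}/n$. This reduces the problem to a group-theoretic observation: the orbits of the map $j\mapsto j+k$ on $\mathbb{Z}/n$ are precisely the cosets of the subgroup $\langle k\rangle=\langle\gcd(k,n)\rangle\subset \mathbb{Z}/n$. So any solution is constant on these $d:=\gcd(k,n)$ orbits; equivalently the tuple $(c_j)$ is $d$-periodic. Conversely, once periodicity is imposed, all the $E_i$ collapse to a single equation, and since $k$ is a multiple of $d$, the sum in $E_k = \sum_{j=1}^k c_j$ equals $(k/d)\sum_{j=1}^d c_j$, which vanishes iff
\begin{equation*}
c_1+c_2+\cdots+c_d \;=\; 0.
\end{equation*}

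Finally I would assemble the count: the $d$-periodic tuples form a $d$-dimensional subspace, and the single linear constraint above cuts out a hyperplane, giving dimension $d-1=\gcd(k,n)-1$. The only place requiring care is the modular index arithmetic in the differencing step — in particular confirming that $E_{i+1}-E_i$ truly reduces to one pairwise equality regardless of how the window $[i-k+1,i]$ wraps around $n$; once this is verified, the rest is a straightforward orbit count and one rank computation, so I do not anticipate a serious obstacle.
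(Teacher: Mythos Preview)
Your proposal is correct and follows essentially the same approach as the paper's own proof: both write momentum conservation restricted to the cyclic-interval span as the circulant system of $n$ equations in the coefficients, take differences of adjacent equations to extract the periodicity $c_{j+k}=c_j$, collapse via $\gcd(k,n)$-periodicity, and count the dimension as $d-1$ from the single surviving linear constraint. Your treatment of the converse direction (verifying explicitly that all $E_i$ reduce to $(k/d)\sum_{j=1}^d c_j$) is slightly more detailed than the paper's one-line remark, but the argument is the same.
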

\begin{proof}
	The result is a special case of a general property of circulant matrices as in \cite{ingleton1956rank}; we give a self-contained proof.  The restriction of momentum conservation to the subspace $\text{span}\{e^{\lbrack j,j+k-1\rbrack}:j=1,\ldots, n\}$ reduces to solving the system of $n$ equations 
	\begin{eqnarray*}
		x_1+x_{2} + \cdots +x_{k} & = & 0\\
		x_2+x_{3} + \cdots +x_{k+1} & = & 0\\
		& \vdots & \\
		x_n+x_1 + \cdots +x_{k-1} & = & 0.
	\end{eqnarray*}
	where we define $x_j = \mathfrak{s}_{j,j+1,\ldots, j+k-1}$.  Taking the difference of adjacent equations we find that $x_j = x_{j+k}$ for all $j=1,\ldots, n$, where the indices are cyclic modulo $n$.  Letting $g = \gcd(k,n)$, it follows that these equalities collapse to $x_j = x_{j+g}$ and we have one linear equation which is satisfied among the $g$ variables $x_1,\ldots, x_{g}$, whence the dimension is $g-1$.
\end{proof}
There is now a clear basis of $g-1$ vectors with entries $-1,0,1$ for the space of pure kinematic shifts, 

\begin{eqnarray}\label{eq: pure kinematic shift A}
	\sum_{j=0}^{n/g-1} \left(e^{\lbrack i+gj,i+gj+k-1\rbrack} - e^{\lbrack i+gj+1,i+gj+k\rbrack}\right): i=1,\ldots, g-1.
\end{eqnarray}
For example, when for $(k,n) = (2,4)$ we have the single basis element $e^{12} - e^{23} + e^{34} - e^{14}$, while for $(k,n) = (3,6)$ there are two basis elements
$$e^{123} - e^{234} + e^{456} - e^{156} \text{ and } e^{234} - e^{345} + e^{156} - e^{126}.$$ 

We now present a generating set of solutions for the space of pure kinematic shifts, in terms of the $X_J$ basis (see Theorem \ref{prop: sJ expansion X basis} and in particular Corollary \ref{cor: basis} for the proof of the basis property).  

Denote $g = \gcd(k,n)$.  For each $i=1,\ldots, g$, define a kinematic shift in the direction $\sigma_i \in \mathcal{K}_{k,n}$ by 
\begin{eqnarray}
	X_{J} \mapsto X_J + n_{i,J}\delta,
\end{eqnarray}
where 
$$n_{i,J} = \vert J \cap \{i,i+g,i+2g,\ldots\}\vert -k/g.$$
First note that since the $X_J$ form a basis of the dual kinematic space, our formula uniquely determines the deformation; in other words, $\sigma_i \in \mathcal{K}_{k,n}$ is determined by $X_J = \vert J \cap \{i,i+g,i+2g,\ldots\}\vert -k/g$ for all $J \in \binom{\lbrack n\rbrack}{k}^{ncyc}$.
Clearly, they satisfy the unique relation $\sum_{i=1}^{g}n_{i,J} = 0$ for all $J$.  We claim that all such kinematic shifts are pure, only the cyclic $\mathfrak{s}_J$'s are nonzero.  Let us show this.  Using Appendix \ref{sec: two formulas for k propagators} it follows that 
$$\mathfrak{s}_{j+1,\ldots, j+k} = X_{j,j+2,\ldots, j+k},$$
and consequently it is easy to verify that $n_{i,\{j-1\}\cup \lbrack j+1,j+k\rbrack} = \delta_{j-1,i} - \delta_{j,i+1}$ for each $j=1,\ldots, g$.  We include the full proof that the deformation of Equation \eqref{eq: pure kinematic shift A} is pure in Section \ref{sec: two formulas for k propagators}.

For example, in $\mathcal{K}_{3,6}$ where $g= 3$, only the first one of the following is changed by pure deformations:
\begin{eqnarray*}
	\mathfrak{s}_{234} & = & X_{134},\\
	\mathfrak{s}_{134} & = & X_{124}-X_{134}-X_{246}+X_{346},\\
	\mathfrak{s}_{124} & = & X_{123}-X_{124}-X_{236}+X_{246},
\end{eqnarray*}
noting that $X_{123}=0$, while in $\mathcal{K}_{3,9}$ we have
$$\mathfrak{s}_{135} = -X_{124}+X_{125}+X_{134}-X_{135}+X_{249}-X_{259}-X_{349}+X_{359},$$
which is again not changed by any pure deformation. 

Let us now analyze the deformation of the $X$-basis in the $6$-point case.  The system of 14 equations $\mathfrak{s}_{ijk}=0$ for all $\{i,j,k\} \in \binom{\lbrack n\rbrack}{3}^{ncyc}$ has a two-dimensional solution space.  One solution is given by the kinematic shift 
$$\sigma_1 = e^{234} + e^{156} - e^{123} - e^{456} \in \mathcal{K}_{3,6},$$
or in the $X$ basis
\begin{eqnarray*}
	& & X _{124} \mapsto X _{124} +\delta,\ X _{134} \mapsto X _{134} +\delta,\ X _{145} \mapsto X _{145} +\delta,\ X _{146} \mapsto X _{146} +\delta \\
	& & X _{235} \mapsto X _{235}-\delta,\  X _{236} \mapsto X _{236}-\delta,\ X _{256} \mapsto X _{256}-\delta,\ X _{356} \mapsto X _{356}-\delta.
\end{eqnarray*}
The kinematic shifts $\sigma_1,\sigma_2$ generate the two-dimensional solution space. 

We now introduce the main object of our study.
\begin{defn}
	Given any pure kinematic shift $X_J \mapsto X_J + n_J\delta$ in a given direction  $\sigma \in \mathcal{K}_{k,n}$, denote by $(m^{(k)}_n)_\sigma$ the generalized biadjoint scalar amplitude shifted by $\sigma$.  The generalized NLSM amplitude $\mathcal{A}^{(k),\sigma}_n$ is the leading coefficient of $\delta$ in the $\delta\rightarrow\infty$ expansion of $(m^{(k)}_n)_\sigma$.
\end{defn}

One would like a direct way to compute such deformations using the Global Schwinger Parameterization, see Equation \eqref{eq: GSP0}, which uses a parameterization of the positive tropical Grassmannian to glue together all tree-level Schwinger parameterizations into a single piecewise-exponential integral; it has appeared in numerous places in recent years.  Unfortunately, naively introducing a $\delta$-deformation leads to non-convergent integral.  Luckily, there is a beautiful resolution for which we thank Freddy Cachazo; we introduce the object here, though a detailed treatment is beyond the scope of this paper.  
Recall \cite{Cachazo:2022vuo} that the Global Schwinger Parameterization for $m^{(k)}_n$ is the integral 
\begin{eqnarray}\label{eq: GSP0}
	m^{(k)}_n(\mathfrak{s}_J) & = & \int_{\mathbb{T}^{n-k} \times \cdots \times \mathbb{T}^{n-k}}\exp(-\mathcal{F}^{(k)}_n(\mathfrak{s}, y))dy,
\end{eqnarray}
where $\mathbb{T}^{n-k} = \mathbb{R}^{n-k}\slash \mathbb{R}(1,\ldots, 1)$ is the tropical torus.  Here the tropical potential $\mathcal{F}^{(k)}_n$ is given by 
\begin{eqnarray}\label{eq: tropPot0}
	\mathcal{F}^{(k)}_n(\mathfrak{s},y) & = & \sum_{J \in \binom{\lbrack n\rbrack}{k}}p^{\text{tr}}_J(y) \mathfrak{s}_J,
\end{eqnarray}
where $p^{\text{tr}}_J(y)$ is the tropicalization of the Pl\"{u}cker coordinate $p_J$ in a positive parameterization; we choose the positive parameterization in Appendix \ref{sec:positive parametrization}.

For any (pure) deformation direction
$$\sigma = \sum_{j=1}^n c_j e^{\lbrack j,j+k-1\rbrack}\in \mathcal{K}_{k,n},$$
denote by $\mathcal{A}^{(k),\sigma}_n$ the generalized NLSM amplitude, that is the leading order coefficient where $\delta\rightarrow\infty$ in the integral

\begin{eqnarray}
	(m^{(k)}_n)_\sigma & = & \int_{\mathbb{R}^{(k-1)(n-k-1)}}\exp\left(-\mathcal{F}^{(k)}_n(y) -i\delta \sum_{j=1}^n c_j(p^\text{tr}_{\lbrack j,j+k-1}(y))\right)dy,
\end{eqnarray}
where $i^2 = -1$.  
For example, when $k=2$ and $n=4$, and $\sigma = e^{14} + e^{23} - e^{12} - e^{34}$ we have
\begin{eqnarray}
	(m^{(2)}_4)_{\sigma} & = & \int_{\mathbb{R}}\exp\left(-\mathcal{F}^{(2)}_4 - i\delta(p^\text{tr}_{14}+p^\text{tr}_{23}-p^\text{tr}_{12}-p^\text{tr}_{34})\right)dy
\end{eqnarray}
where
$$\mathcal{F}^{(2)}_4 = \sum_{ij}p^\text{tr}_{ij}s_{ij}$$
and 
$$p^\text{tr}_{12} = 0,\ p^\text{tr}_{13} = 0,\ p^\text{tr}_{14} = 0,\ \ p^\text{tr}_{23} = y_1,\ p^\text{tr}_{24} = \min(y_1,y_2),\ p^\text{tr}_{34} = y_2.$$
Evaluating the integral after choosing the gauge where $y_1=0$, say, gives
\begin{eqnarray}
	(m^{(2)}_4)_\sigma	& = & \frac{1}{X_{13} - i\delta} + \frac{1}{X_{24} + i\delta},
\end{eqnarray}
and the leading order coefficient when $\delta \rightarrow\infty $ is the NLSM amplitude $\mathcal{A}^{(2),\sigma}_4 = X_{13}+X_{24}$:
$$(m^{(2)}_4)_\sigma = \frac{X_{13}+X_{24}}{\delta ^2}-\frac{i \left(X_{13}-X_{24}\right) \left(X_{13}+X_{24}\right)}{\delta ^3}+\mathcal{O}(\delta^{-4}).$$
In the rest of this paper, we suppress the imaginary unit $i$ from the deformation factor.
\section{$k$-Propagators: Formulas and Linear Independence}\label{sec: two formulas for k propagators}
Much of the material in this section has appeared before, including in \cite{Cachazo:2020uup,Cachazo:2020wgu,Early:2019eun,Early:2020hap,Early:2021tce,Early:2022zny,Early:2022mdn,Early:2018zuw}; we gather it here in an improved exposition. The directed distance function $d_{I,J}$ is the minimal number of steps in the directions $e_1-e_2,e_2-e_3, \ldots, e_n-e_1$ required to walk from the vertex $e_I$ to the vertex $e_J$ on the 1-skeleton of $\Delta_{k,n}$.  We will show that this rule can be written very compactly as
\begin{eqnarray}\label{eq: dIJ Li}
	d_{I,J} & = & -\min\{L_1(e_J-e_I),\ldots, L_n(e_J- e_I)\}
\end{eqnarray}
where
$$L_j(x) = x_{j+1} + 2x_{j+2} + \cdots + (n-1)x_{j-1}.$$
With this preparation, we can define our main combinatorial object, the planar basis element $X_J$, by	
\begin{eqnarray}\label{eq: XJ definition}
	X_J(\mathfrak{s}) & = & \frac{1}{n}\sum_{I\in \binom{\lbrack n\rbrack}{k}}d_{J,I}\mathfrak{s}_I.
\end{eqnarray}
\begin{figure}[h!]
	\centering
	\includegraphics[width=0.6\linewidth]{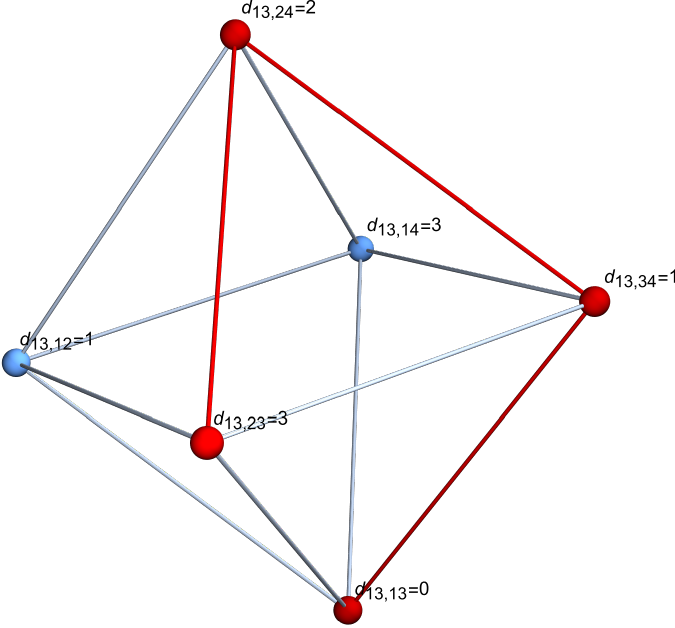}
	\caption{Directed distance function computation of $X_{13} = \frac{1}{4} \left(s_{12}+3 s_{14}+3 s_{23}+2 s_{24}+s_{34}\right)$.  In this case, it simplifies modulo momentum conservation to just $s_{23}$.  The length 3 path from $e_{1}+e_3$ to $e_{2}+e_3$ is shown in red, where we abbreviate $e_i+e_j \in \Delta_{2,4}$ as $ij$.}
	\label{fig:distanceoctahedron}
\end{figure}
\begin{example}
	The expansions of $X_{13}$ and $X_{246}$ are 
	$$X_{13} = \frac{1}{4} \left(s_{12}+3 s_{14}+3 s_{23}+2 s_{24}+s_{34}\right) = s_{23},$$
	and 
	\begin{eqnarray*}
		X_{246} & = & \frac{1}{6}\left(6 s_{123}+5 s_{124}+4 s_{125}+3 s_{126}+4 s_{134}+3 s_{135}+2 s_{136}+2 s_{145}+s_{146}+6 s_{156}\right.\\
		& + & \left. 3 s_{234}+2 s_{235}+s_{236}+s_{245}+5 s_{256}+6 s_{345}+5 s_{346}+4 s_{356}+3 s_{456}\right).
	\end{eqnarray*}
	See Figure \ref{fig:distanceoctahedron}.
\end{example}
Let us now prove Equation \eqref{eq: dIJ Li}, thereby putting on firm footing Equation \eqref{eq: XJ definition} for the $X$-variables.  For each $j=1,\ldots, n$ let $\Pi_j$ be the simplicial cone
$$\Pi_j = \left\{t_1(e_1-e_2)+t_2(e_2-e_3)+\cdots +t_n(e_n-e_1):t_1,\ldots, t_n\ge 0;\ t_{j-1}=0\right\}.$$
Given $u,v \in \mathbb{Z}^n$ with $\sum_{i=1}^n u_i =\sum_{i=1}^n v_i \in\mathbb{Z}$, then, noting that $\Pi_1,\ldots, \Pi_n$ are the maximal cones in a complete simplicial fan, i.e. in the inner normal fan to a certain alcoved simplex, it follows that $u-v$ lies in the relative interior of a maximal intersection of a unique subset of the $\Pi_j$, so that
$$u-v = \sum_{\ell \in \{j_1,\ldots, j_d\}} t_\ell(e_\ell - e_{\ell+1}).$$
Supposing without loss of generality that $t_n=0$, then the formula is $-\min\{L_1(u-v),\ldots, L_n(u- v)\} = -L_1(u-v)$, where
\begin{eqnarray}\label{eq: L plate}
	L_1(u-v) & = & (u-v)\cdot \sum_{j=2}^{n}\left(e_j+e_{j+1}+\cdots e_{n}\right)\\
	& = & \left(\sum_{j=1}^{n-1} t_j (e_j-e_{j+1})\right)\cdot \sum_{j=2}^{n}\left(e_j+e_{j+1}+\cdots +e_{n}\right)\nonumber\\
	& = & -\sum_{j=1}^{n-1}t_j,\nonumber
\end{eqnarray}
whose negative is the total distance walked from $v$ to $u$ in the directions of the cyclic roots $e_1-e_2,e_2-e_3,\ldots, e_n-e_1$.  In particular, when $(u,v) = (e_I,e_J)$ is a pair of vertices of $\Delta_{k,n}$, the formula evaluates to $d_{J,I}=\sum_{j=1}^{n-1}t_j$ and the result holds.

When $k=2$ we are in the case of standard QFT propagators; we now show our formula coincides with an index shift of the more standard expression, $X'_{i,j} := \left(p_{i+1} + \cdots +p_{j}\right)^2 = \sum_{i+1\le a<b\le j}s_{ab}$, where we are taking into account that $p_j^2=0$ as usual.
\begin{prop}\label{prop: planar basis identity 2n}
	For any distinct $i,j$, taking into account momentum conservation we have
	$$X_{i,j} = \left(p_{i+1} + \cdots +p_{j}\right)^2,$$
	where $X_{i,j}$ is given by Equation \eqref{eq: XJ definition}.
\end{prop}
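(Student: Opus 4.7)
The plan is to recast the identity, modulo momentum conservation, as a uniform closed-form expression for $d_{\{i,j\},\{a,b\}}$ and then read off the planar invariant. Starting from the just-proven formula $d_{J,I} = -\min_j L_j(e_I-e_J)$ together with its optimal-transportation interpretation (two units of mass pushed along the oriented cycle $n\to n-1\to\cdots\to 1\to n$), I would first show that for any $\{i,j\},\{a,b\}\in\binom{[n]}{2}$,
\begin{equation*}
d_{\{i,j\},\{a,b\}} \;=\; \min\bigl\{(i-a)_n + (j-b)_n,\; (i-b)_n + (j-a)_n\bigr\},
\end{equation*}
where $(x)_n\in\{0,1,\ldots,n-1\}$ denotes the residue of $x$ modulo $n$. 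Since the transportation LP has only two units of supply, it is integral and is solved by the better of the two possible perfect matchings between sources and sinks.

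Assuming $a<b$, I would then carry out a short case analysis based on which of the three arcs $[1,i]$, $(i,j]$, $(j,n]$ contain $a$ and $b$. Working out the six resulting configurations, the minimum above collapses into the uniform expression
\begin{equation*}
d_{\{i,j\},\{a,b\}} \;=\; (i+j-a-b) \;+\; n\bigl([a>j]+[b>j]\bigr) \;+\; n\,\bigl[\{a,b\}\subseteq(i,j]\bigr].
\end{equation*}
Each case is an elementary algebraic check, and the indicator $[\{a,b\}\subseteq(i,j]]$ is tuned so that it fires precisely when both elements lie strictly past $i$ and neither past $j$.

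Inserting this into the definition of $X_{i,j}$ in Equation \eqref{eq: XJ definition}, the contribution $n\,[\{a,b\}\subseteq(i,j]]$ supplies exactly $\sum_{i<a<b\le j}s_{ab} = (p_{i+1}+\cdots+p_j)^2$. The remaining two contributions each have the symmetric form $\mu_a+\mu_b$ (with $\mu_a = (i+j)/(2n) - a/n$ and $\mu_a = [a>j]$ respectively), so the telescoping identity
\begin{equation*}
\sum_{a<b}(\mu_a+\mu_b)\,s_{ab} \;=\; \tfrac{1}{2}\sum_{a}\mu_a\sum_{b\ne a}s_{ab} \;+\; \tfrac{1}{2}\sum_{b}\mu_b\sum_{a\ne b}s_{ab}
\end{equation*}
combined with the momentum conservation relations $\sum_{b\ne a}s_{ab}=0$ makes them vanish identically. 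The main obstacle I expect is the bookkeeping in the case analysis: one must track the half-open versus closed boundaries $a=i$ and $b=j$ carefully so that the indicator fires exactly on the configuration with both $a,b\in(i,j]$, and not on any of the mixed configurations where the optimal matching is forced to wrap around the cycle and contributes a full factor of $n$.
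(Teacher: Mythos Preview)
Your argument is correct and complete: the closed form for $d_{\{i,j\},\{a,b\}}$ checks out in all six configurations, and the decomposition into an additive piece $\mu_a+\mu_b$ (killed by momentum conservation) plus the indicator $n\,[\{a,b\}\subseteq(i,j]]$ (yielding $(p_{i+1}+\cdots+p_j)^2$) does the job cleanly.

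It is, however, a genuinely different proof from the paper's. The paper argues \emph{geometrically}: it shows that both $X_{i,j}$ and $X'_{i,j}=(p_{i+1}+\cdots+p_j)^2$, viewed as height functions on $\Delta_{2,n}$, induce the same coarsest matroid subdivision---the $2$-split along the hyperplane $x_{i+1}+\cdots+x_j=1$---by locating where the piecewise-linear function $\rho_{ij}(x)=-\min_\ell L_\ell(x-e_i-e_j)$ bends. Since a coarsest subdivision corresponds to a ray of the secondary fan, the two functionals are forced to be proportional, and the factor $1/n$ fixes the constant. Your route is instead purely combinatorial: you bypass subdivisions entirely, solve the two-unit cyclic transportation problem exactly, and peel off the momentum-conservation null pieces by hand. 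What the paper's argument buys is a template that lifts to $k\ge 3$, where an explicit $d_{J,I}$ formula and case analysis would be unwieldy but the ``same coarsest subdivision $\Rightarrow$ proportional'' mechanism still applies. What your argument buys is self-containment---no appeal to rays of fans or to the fact that coarsest subdivisions pin down a functional up to scale---and an explicit distance formula for $k=2$ that may be independently useful.
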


\begin{proof}
	For the computational argument, one simply makes the substitution $s_{ij} = -(X_{i,j} + X_{i+1,j+1} - X_{i+1,j} - X_{i,j+1})$ into the formula for $X_{i,j}$ and simplifies, but this does not provide much insight; we give a geometric argument which provides crucial intuition when $k\ge 3$.  We need to show that $X'_{i,j}$ and $X_{i,j}$ induce the same subdivision of $\Delta_{2,n}$ into two matroid polytopes that share the same internal facet
	\begin{eqnarray}
		x_{i+1} + \cdots + x_j = 1,
	\end{eqnarray}
	and that in particular that subdivision is coarsest; from this it follows that $X'_{i,j}$ and $X_{i,j}$ are proportional.  The normalization factor on $X_{i,j}$ makes them equal.
	
	To whit: we claim that $X'_{i,j}$ and $X_{i,j}$ induce the subdivision of $\Delta_{2,n}$, consisting of the two matroid subpolytopes that share the following common internal facet:
	\begin{eqnarray}\label{eq: 2-split}
		x_{i+1} + \cdots + x_j = 1 (= x_{j+1} + \cdots + x_i),
	\end{eqnarray}
	noting that $x_1+\cdots +x_n=2$.
	
	We first apply momentum conservation to rewrite $X'_{i,j}$:
	\begin{eqnarray}
		X'_{i,j} & = & -(p_{i+1}+ \cdots p_j)\cdot (p_{j+1} + \cdots + p_i)\\
		& = & -\sum_{(a,b) \in \lbrack j+1,i\rbrack \times \lbrack i+1,j\rbrack} s_{a,b},
	\end{eqnarray}
	noting that the coefficients of the Mandelstam invariants $s_{a,b}$ define, modulo lineality, a weight vector over the vertices of $\Delta_{2,n}$ which achieves its minimum (at least) twice on the hyperplane in Equation \eqref{eq: 2-split}.  Consequently it induces the subdivision of $\Delta_{2,n}$ consisting of the two matroid polytopes 
	\begin{eqnarray}\label{eq: facet inequalities 2-split}
		x_{i+1} + \cdots + x_j \ge 1\text{ and } x_{j+1} + \cdots + x_{i}\ge 1,
	\end{eqnarray}
	respectively.  We will show in two steps that $X_{i,j}$ induces the same subdivision as $X'_{i,j}$.  First, we observe that the piecewise linear function
	$$\rho_{ij}(x) := -\min\{L_1(x-(e_i+e_j)),\ldots, L_n(x-(e_i+e_j)))\},$$
	satisfies $\rho_{ij}(e_a+e_b) = d_{ij,ab}$, and that $\rho_{ij}(x)$ has a discontinuous derivative across the Cartesian product of simplices
	\begin{eqnarray}\label{eqn: cartesian product 2n}
		\left\{x\in \Delta_{2,n}: x_{i+1} + \cdots + x_{j} = x_{j+1} + \cdots + x_i=1\right\},
	\end{eqnarray}
	or equivalently
	\begin{eqnarray}
		\left\{x\in \Delta_{2,n}: L_i(x) = L_j(x)\right\}.
	\end{eqnarray}
	This construction subdivides $\Delta_{2,n}$ into the two matroid polytopes in Equation \eqref{eq: facet inequalities 2-split}: $\rho_{ij}$ is linear over each polytope, and its derivative is discontinuous across their common intersection.
	
	Since $X'_{i,j}$ and $X_{i,j}$ induce the same subdivision, and because that subdivision is coarsest, i.e. it refines only the trivial subdivision, that means that they must be proportional.  That proportionality constant is fixed by $\frac{1}{n}$ in the formula for $X_{i,j}$.
\end{proof}

Now let $(\mathbf{S},\mathbf{r})$ be a decorated ordered set partition of type $\Delta_{k,n}$, where $\mathbf{S} = (S_1,\ldots, S_d)$.  Define
$$X_{(\mathbf{S},\mathbf{r})} = -\frac{1}{d}\sum_{J \in \binom{\lbrack n\rbrack}{k}}\min\left(M_{1}(e_J),\ldots, M_{d}(e_J)\right),$$
where
$$M_j(x) = r_{j+1}x_{S_{j+1}} + (r_{j+1}+r_{j+2})x_{S_{j+2}} + \cdots + (r_{j+1}+\cdots +r_{j-1})x_{S_{j-1}}.$$
having denoted $x_{S_j} = \sum_{i\in S_j}x_i$.

\begin{prop}[\cite{Early:2022mdn}]
	Given any $J \in \binom{\lbrack n\rbrack}{k}$, Let $(\mathbf{S},\mathbf{r})$ be the decorated ordered set partition defined as follows.  Let $j_{\ell_1},\ldots, j_{\ell_d}$ be the endpoints of $J$ with respect to the cyclic order $(1,2,\ldots, n)$.  For each $i=1,\ldots, d$, let $S_{\ell_i}$ be the cyclic interval
	$$S_{\ell_i} = \lbrack j_{\ell_{i-1}+1},j_{\ell_{i}} \rbrack$$
	and let 
	$$r_{\ell_i} = \big\vert \lbrack j_{\ell_{i-1}}+1,j_{\ell_{i}}\rbrack \cap J \big\vert.$$
	We have
	$$X_{J} = X_{(\mathbf{S},\mathbf{r})}.$$
\end{prop}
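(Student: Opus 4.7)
The plan is to show $X_J = X_{(\mathbf{S},\mathbf{r})}$ by verifying that the coefficient-wise difference between the two expansions in $\mathfrak{s}_I$ is an affine function of the vertex coordinates $(e_I)_1,\ldots,(e_I)_n$. Any linear functional $\sum_p c_p(e_I)_p$ pairs with $\mathfrak{s}_I$ to zero via momentum conservation $\sum_{I\ni p}\mathfrak{s}_I = 0$, and the constant term vanishes because $\sum_I\mathfrak{s}_I = 0$ (obtained by summing momentum conservation over $p$ and dividing by $k$). Substituting $d_{J,I} = -\min_j L_j(e_I-e_J)$, it suffices to prove
\[
\min_{j\in \lbrack n\rbrack} L_j(e_I - e_J) \;=\; \tfrac{n}{d}\,\min_{i\in \lbrack d\rbrack} M_i(e_I) + \ell(e_I) + c
\]
on the hypersimplex $\sum_p (e_I)_p = k$, for some $I$-independent linear functional $\ell$ and constant $c$.

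The first step reduces the minimum over $\lbrack n\rbrack$ to one over the $d$ block-start positions $j^{\mathrm{start}}_i := j_{\ell_{i-1}}+1$. The step identity $L_{j+1}(x)-L_j(x) = -\sum_p x_p + n x_j$ gives, on the hypersimplex, that the walk $j\mapsto L_j(e_I-e_J)$ has increments $n([j\in I]-[j\in J])\in\{+n,0,-n\}$. Since $J\cap S_{\ell_i}$ is exactly the length-$r_{\ell_i}$ tail of $S_{\ell_i}$, traversing the block yields first non-negative then non-positive steps, so the minimum within each block-journey (from $j^{\mathrm{start}}_i$ up to $j^{\mathrm{start}}_{i+1}$) is attained at a block-start. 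Iterating cyclically gives $\min_{j\in \lbrack n\rbrack} L_j(e_I-e_J) = \min_{i\in \lbrack d\rbrack} L_{j^{\mathrm{start}}_i}(e_I-e_J)$.

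The second step is the core combinatorial identity
\[
L_{j^{\mathrm{start}}_i}(e_I - e_J) = \tfrac{n}{d}M_i(e_I) + \ell(e_I) + c,
\]
with $\ell, c$ \emph{independent of $i$} modulo the relation $\sum_p (e_I)_p - k = 0$. Expanding directly: the coefficient of $(e_I)_p$ in $L_{j^{\mathrm{start}}_i}$ is the cyclic distance from $j^{\mathrm{start}}_i$ to $p$, while in $(n/d)M_i$ it is $n/d$ times the cumulative $r$-weight of the blocks strictly between $S_{\ell_i}$ and the block of $p$. Advancing $i\mapsto i+1$ shifts cyclic distances pointwise by $|S_{\ell_i}|$ and scaled cumulative $r$-weights pointwise by $(n/d)\,r_{\ell_i}$; the pointwise discrepancy across all $p$ assembles into a uniform shift, i.e.\ a multiple of $\sum_p (e_I)_p - k$. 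A parallel argument using $|J|=k$ shows that the constant $L_{j^{\mathrm{start}}_i}(e_J)$ depends on $i$ only by a shift of the same form.

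Since $\ell, c$ are $i$-independent on the hypersimplex and $n/d > 0$, the index $i$ minimizing $L_{j^{\mathrm{start}}_i}(e_I-e_J)$ coincides with the one minimizing $M_i(e_I)$, yielding $\min_{i\in \lbrack d\rbrack}L_{j^{\mathrm{start}}_i}(e_I-e_J) = (n/d)\min_{i\in \lbrack d\rbrack}M_i(e_I) + \ell(e_I) + c$; combined with Step 1, this gives the required affine identity and hence $X_J = X_{(\mathbf{S},\mathbf{r})}$. The main obstacle is Step 2: the scaling $n/d$ is fractional in general (e.g.\ for $(n,k)=(5,2)$ and $J=\{1,3\}$ one has $n/d=5/2$), so the cancellation holds only modulo the hypersimplex constraint, not on the nose. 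Verifying that the block-rotation residuals assemble consistently into a multiple of $\sum_p (e_I)_p - k$ requires careful combinatorial bookkeeping using the identities $\sum_i |S_{\ell_i}| = n$ and $\sum_i r_{\ell_i} = k$.
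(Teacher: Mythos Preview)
Your Step 1 is correct and clean: since within each block $S_{\ell_i}$ the increments $L_{j+1}(x-e_J)-L_j(x-e_J)=n(x_j-[j\in J])$ are first $\ge 0$ (where $j\notin J$) and then $\le 0$ (where $j\in J$), the minimum over $[n]$ is attained at some block-start, and this holds for every $x\in\Delta_{k,n}$, not just vertices.

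The genuine gap is Step 2. The identity $L_{j^{\mathrm{start}}_i}(e_I-e_J)=(n/d)M_i(e_I)+\ell(e_I)+c$ with $\ell,c$ independent of $i$ is \emph{false} whenever $d<k$, i.e.\ whenever some cyclic interval of $J$ has length $\ge 2$. Compute the consecutive difference directly. Summing the step identity over $j\in S_i$ gives
\[
L_{j^{\mathrm{start}}_{i+1}}(x-e_J)-L_{j^{\mathrm{start}}_i}(x-e_J)=n\bigl(x_{S_i}-r_i\bigr),
\]
while from the definition of $M$ one finds $M_{i+1}(x)-M_i(x)=k\,x_{S_i}-r_{i+1}\sum_p x_p$, hence
\[
\frac{n}{d}\bigl(M_{i+1}(x)-M_i(x)\bigr)=\frac{nk}{d}\,x_{S_i}-\frac{n r_{i+1}}{d}\sum_p x_p.
\]
Subtracting, the increment of $L_{j^{\mathrm{start}}_i}(x-e_J)-(n/d)M_i(x)$ equals $n(1-k/d)\,x_{S_i}+\text{(affine in }\sum_p x_p\text{)}$. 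The coefficient $n(1-k/d)$ is nonzero unless $k=d$, so the ``block-rotation residual'' carries a genuine $x_{S_i}$-dependence and does \emph{not} assemble into a multiple of $\sum_p(e_I)_p-k$. A concrete failure is $(k,n,J)=(3,6,\{1,2,4\})$: with $S_1=\{5,6,1,2\},r_1=2$ and $S_2=\{3,4\},r_2=1$, the two functions $L_5(x)-3M_1(x)$ and $L_3(x)-3M_2(x)$ differ by $4\sum_p x_p-3x_{S_2}$, which is not uniform on the hypersimplex. Equivalently, $\min_i L_{j^{\mathrm{start}}_i}(\cdot-e_J)$ bends along $x_{S_i}=r_i$ while $\min_i M_i(\cdot)$ bends along $x_{S_i}=r_{i+1}$, so no $i$-by-$i$ alignment of the kind you propose can exist. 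The difficulty you flag is therefore not a matter of bookkeeping but a structural mismatch.

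The paper's route is different in kind: it does not attempt any termwise linear identity between $L_{j^{\mathrm{start}}_i}$ and $M_i$. Instead it argues geometrically that the two tropical hyperplanes (equivalently, the regular subdivisions of $\Delta_{k,n}$ induced by $\rho_J$ and $\rho_{(\mathbf{S},\mathbf{r})}$) coincide, which forces the weight functions to agree up to scaling and lineality; the prefactors $1/n$ and $1/d$ then fix the scale. That argument lives at the level of where the minima are achieved \emph{twice}, not at the level of matching individual linear branches.
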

We give the main idea of the proof from \cite{Early:2022mdn}.
\begin{proof}[Sketch of Proof]
	The proof that $X_J$ and $X_{(\mathbf{S},\mathbf{r})}$ induce the same (coarsest) subdivision can be easily deduced from \cite[Theorem 17]{Early:2022zny}.  The task is to verify, by computing where the minima in each case is achieved twice, that the tropical hyperplane defined by 
	$$\rho_{(\mathbf{S},\mathbf{r})}(x) = -\frac{1}{d}\min(M_{S_1}(x),\ldots, M_{S_d}(x))$$
	coincides (on their intersection with $\Delta_{k,n}$) with the tropical hyperplane defined by 
	$$\rho_{J}(x) = -\frac{1}{n}\min\left(L_1(x-e_J),\ldots, X_n(x-e_J)\right),$$
	where 
	$$L_i = x_{i+1} + 2 x_{i+2} + \cdots + (n-1)x_{i-1}.$$
\end{proof}
In the next section, we continue by giving the expansion of $\mathfrak{s}_J$ in the $X$-basis.  If $J = (j_1<\cdots <j_k) \in \binom{\lbrack n\rbrack}{k}^{ncyc}$, denote by $I_J$ the cyclic \textit{initial} indices of $J$:
$$I_J = \left\{j \in \lbrack n\rbrack: (j-1,j) \in J^c \times J\right\}.$$

\subsection{Basis Proof}
The following Theorem was proved first in \cite{Early:2020hap}, but the argument was quite short; therefore we present a new proof with more details.  The important point is that it proves that the $X$-variables are linearly independent, and form a basis of linear functions on the kinematic space.
\begin{thm}[\cite{Early:2020hap}]\label{prop: sJ expansion X basis}
	We have
	\begin{eqnarray}\label{eq: mandelstam expand}
		\mathfrak{s}_J & = & \sum_{e_M \in C_J}(-1)^{\vert J \cap M\vert-k-1}X_{M}
	\end{eqnarray}
	where the sum is over the cubical array
	$$C_J = \left\{e_J + \sum_{M \subseteq I_J}(e_{m_{j}-1} - e_{m_{j}}) \right\},$$
	where $M = (m_1,m_2,\ldots)$.
\end{thm}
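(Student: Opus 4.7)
The plan is to verify the identity of the theorem by direct substitution, reducing it to a combinatorial identity about the directed distance function on $\Delta_{k,n}$. Inserting the definition $X_M = \frac{1}{n}\sum_{I} d_{M,I}\mathfrak{s}_I$ into the right-hand side and comparing coefficients of $\mathfrak{s}_I$, the theorem reduces to the claim that, in the kinematic space $\mathcal{K}_{k,n}$,
\[
F_J(I)\;:=\;\sum_{S \subseteq I_J}(-1)^{|S|+1}\,d_{M(S),I} \;=\; n\,\delta_{I,J}
\]
for every $I \in \binom{[n]}{k}$, where $M(S) := (J\setminus S)\cup\{j-1 \mid j\in S\}$ (so that the $e_M \in C_J$ correspond to the choices $S \subseteq I_J$, and the sign $(-1)^{|J\cap M|-k-1}$ becomes $(-1)^{|S|+1}$). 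Working in $\mathcal{K}_{k,n}$ allows one to freely add to both sides any function $I\mapsto \sum_i a_i\,\mathbf{1}[i\in I]$, since the momentum-conservation relations $\sum_{J\ni i}\mathfrak{s}_J = 0$ precisely annihilate such terms.

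To compute this signed sum, I would combine the compact formula $d_{M,I} = -\min_{\ell} L_\ell(e_I - e_M)$ with the translation identity $e_{M(S)} = e_J + \sum_{j\in S}(e_{j-1}-e_j)$ and the direct evaluation $L_\ell(e_{j-1}-e_j) = -1 + n\,\mathbf{1}[\ell=j]$. Together these yield
\[
d_{M(S),I} \;=\; -|S|\;-\;\min_\ell\bigl[A_\ell - n\,\mathbf{1}[\ell\in S]\bigr], \qquad A_\ell := L_\ell(e_I - e_J).
\]
The $I$-independent piece $\sum_S(-1)^{|S|+1}|S|$ sits in the momentum-conservation quotient, so it remains to analyze $\sum_S (-1)^{|S|+1}\min_\ell[A_\ell - n\,\mathbf{1}[\ell\in S]]$. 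A sanity check at $I = J$ is clean: there $A_\ell\equiv 0$ and every nonempty $S$ produces $\min_\ell[\cdot] = -n$, so the alternating sum collapses to $-n$, contributing $+n$ to $F_J(J)$ after the overall sign flip and recovering the expected leading delta $n\,\delta_{J,J}$.

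The main obstacle is controlling the $\min$ operator for general $I\neq J$, since the minimizing index $\ell^\ast(S)$ can jump as $S$ varies over subsets of $I_J$. My preferred approach is the piecewise-linear / matroidal-blade perspective developed in this section: each $\rho_M(x) := -\frac{1}{n}\min_\ell L_\ell(x-e_M)$ is a concave piecewise-linear function on $\Delta_{k,n}$ inducing a coarsest matroid subdivision, and the signed sum over $S$ is the action of a tropical finite-difference operator $\prod_{j\in I_J}(1-T_j)$ on $\rho_{(\cdot)}$, where $T_j$ translates the basepoint by $e_{j-1}-e_j$. Each factor $(1-T_j)$ concentrates the discontinuity across the tropical hyperplane singled out by the initial index $j$, so composing these commuting factors over $j\in I_J$ localizes the answer to the single vertex $e_J$, which is exactly the content of the combinatorial identity above. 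An alternative route is induction on $|I_J|$ using the decorated-ordered-set-partition formula (the preceding proposition) to decompose $X_J$ block-by-block, with the base case $|I_J|=1$ (cyclic $J$) reducing to the identity $\mathfrak{s}_{[j+1,j+k]} = X_{\{j\}\cup[j+2,j+k]}$ already verified in Section 2. Once the inversion is established, linear independence of $\{X_M : M \in \binom{[n]}{k}^{\mathrm{ncyc}}\}$ is immediate: the Mandelstams $\{\mathfrak{s}_J : J \in \binom{[n]}{k}^{\mathrm{ncyc}}\}$ form a basis of $\mathcal{K}_{k,n}$, and the explicit formula presents each as an integer combination of $X_M$'s, forcing the same-sized collection $\{X_M\}$ to also form a basis.
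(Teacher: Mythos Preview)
Your reduction is correct and matches the paper's setup: substituting $X_M=\tfrac{1}{n}\sum_I d_{M,I}\mathfrak{s}_I$ and comparing coefficients is exactly how the paper proceeds, and your translation identity $d_{M(S),I}=-|S|-\min_\ell[A_\ell-n\,\mathbf{1}[\ell\in S]]$ is right. Your verification of the $I=J$ case is also essentially the paper's computation, phrased in different notation.

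The gap is in the $I\neq J$ case, which is the entire content of the theorem. Your assertion that the operator $\prod_{j\in I_J}(1-T_j)$ ``localizes the answer to the single vertex $e_J$'' is precisely the statement to be proved, not a mechanism for proving it. Each factor $(1-T_j)$ acting on the piecewise-linear function $-\min_\ell L_\ell(\,\cdot\,)$ does \emph{not} simply concentrate support: on a region where the minimizer $\ell^\ast$ is constant, $(1-T_j)$ returns the constant $-1+n\,\mathbf{1}[\ell^\ast=j]$, so after one application you have a step function, not something supported near $e_J$; what happens under composition depends delicately on how the minimizer jumps between chambers, and you have not analyzed this. The induction alternative is likewise only gestured at; the decorated-ordered-set-partition identity does not in any obvious way give an inductive reduction $|I_J|\to|I_J|-1$.

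What the paper actually does for $I\neq J$ is a direct case split according to whether $I\in C_J$ or not, using a concrete path-length identity: for every $L\in C_J$ one has $d_{L,I}=d_{J,I}-d_{J,L}$, possibly corrected by $+n$ when the shortest path has to wrap around (exactly when $I\in C_J$ and $d_{J,I}<d_{J,L}$). This is the missing lemma in your argument; once you have it, the alternating sum $\sum_{L}(-1)^{|J\cap L|-k-1}d_{L,I}$ splits into pieces each of which is an alternating sum of an affine function of $|J\cap L|$ over a cube, hence vanishes. In your $\min_\ell$ language, the identity amounts to showing that for $I\notin C_J$ the minimum $\min_\ell[A_\ell-n\,\mathbf{1}[\ell\in S]]$ is independent of $S$ (equal to $\min_\ell A_\ell$), and for $I\in C_J$ it differs by exactly $n$ on a sub-cube; you would need to prove this, and the paper's path argument is the clean way to see it.
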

In the calculation which follows, we do not assume the full momentum conservation; only the relation $\sum_{J\in \binom{\lbrack n\rbrack}{k}}\mathfrak{s}_J=0$, which is itself is implied by momentum conservation. 
\begin{proof}
	First suppose that $J$ is a single cyclic interval, say  $J= \lbrack 2,k+1\rbrack$.  Denote $J' = \{1,3,\ldots, k+1\}$.  Let $I \in \binom{\lbrack n\rbrack}{k}$ be arbitrary.  Now, since $C_J = \{J,J'\}$ and any path from $J$ to $I$ has to go through $J'$, it follows that the coefficient $d_{J,I} - d_{J',I}$ of $\mathfrak{s}_I$ in $X_J - X_{J'}$ is equal to $1$ unless $I=J$, in which case it is equal to one more than the length of the shortest path from $e_J'$ to $e_J$, that is 
	$$e_{\{1,3,4,\ldots, k+1\}} \rightarrow\cdots  \rightarrow e_{\{1,2,\ldots, k\}} \rightarrow \cdots \rightarrow e_{\{2,3,\ldots, k+1\}},$$
	where the first ellipsis $\cdots$ consists of $k-1$ steps, and the second consists of $n-k-1$ steps.  Thus, 
	$$d_{J,I} - d_{J',I} = -(1+ (k-1) + (n-k-1)) = -(n-1).$$
	It follows that $X_J - X_{J'} = -\mathfrak{s}_J +\frac{1}{n-1} \sum_{I}\mathfrak{s}_I \equiv -\mathfrak{s}_J$, where the last identity holds because the constant height surface over $\Delta_{k,n}$ is in the lineality space, hence is implied by momentum conservation. 
	
	Next let us suppose that $J$ is a disjoint union of at least two cyclic intervals; we compute the coefficient of $\mathfrak{s}_J$ in the right hand side of Equation \eqref{eq: mandelstam expand}.
	
	The (trivial) path from $e_J$ to itself in steps parallel to roots $e_i-e_{i+1}$ has length zero; all others in the sum contributing to the coefficient of $e^J$ are shortenings of the long path (of length $n$) between $e_J$ and itself, so that $d_{J,J} = 0$, and otherwise $d_{J,L} = n - (k-\vert L\cap J\vert)$.  Consequently the alternating sum is now $n$ rather than $0$.  We obtain
	\begin{eqnarray*}
		\left(\sum_{L\in C_J}(-1)^{\vert J \cap L\vert-k-1}X_L\right)_{\lbrack \mathfrak{s}_J\rbrack} & = & \frac{1}{n}\left(\sum_{L \in C_J}(-1)^{\vert J \cap L\vert-k-1}d_{L,J}\right)\\
		& = & \frac{1}{n}\left(\sum_{L \in C_J \setminus J}(-1)^{\vert J \cap L\vert-k-1}d_{L,J}\right)\\
		& = & \frac{1}{n}\left(n + \sum_{L \in C_J}(-1)^{\vert J \cap L\vert-k-1}(n-(k-\vert L\cap J\vert))\right)\\
		& = & 1 + \sum_{L \in C_J}(-1)^{\vert J \cap L\vert-k-1}\vert L\cap J\vert\\
		& = & 1,
	\end{eqnarray*}
	where in the third line we have added back in the $n$ to take into account the contribution when $L = J$, so that $\vert J\cap L\vert = k$, hence  $n-(k - k)=n$.
	
	Now for the case $I\not=J$, where we again assume that $J$ is not a single cyclic interval.  We break this into two steps.  If $I \not\in C_J$, then $d_{L,I} = d_{J,I}- d_{J,L}$, hence 
	\begin{eqnarray*}
		\left(\sum_{L\in C_J}(-1)^{\vert J \cap L\vert-k-1}X_L\right)_{\lbrack \mathfrak{s}_I\rbrack} & = & \frac{1}{n}\left(\sum_{L \in C_J}(-1)^{\vert J \cap L\vert-k-1}d_{L,I}\right)\\
		& = & \frac{1}{n}\left(\sum_{L \in C_J}(-1)^{\vert J \cap L\vert-k-1}(d_{J,I}- d_{J,L})\right)\\
		& = & \frac{1}{n}\left(\sum_{L \in C_J}(-1)^{\vert J \cap L\vert-k-1}( d_{J,I})\right)-\frac{1}{n}\left(\sum_{L \in C_J}(-1)^{\vert J \cap L\vert-k-1}( d_{J,L})\right)\\
		& = & 0,
	\end{eqnarray*}
	since each sum in the third line vanishes separately.  Now suppose that $I \in C_J$.  We have two cases:
	\begin{eqnarray*}
		d_{L,I} & = & \begin{cases} d_{J,I} - d_{J,L}, & d_{J,I} >d_{J,L}\\
			n + (d_{J,I} - d_{J,L}), & d_{J,I} < d_{J,L}. 
		\end{cases}
	\end{eqnarray*}
	and the sum decomposes into two parts that vanish separately:
	\begin{eqnarray*}
		\left(\sum_{L\in C_J}(-1)^{\vert J \cap L\vert-k-1}X_L\right)_{\lbrack \mathfrak{s}_I\rbrack} & = & \frac{1}{n}\left(\sum_{L \in C_J}(-1)^{\vert J \cap L\vert-k-1}d_{L,I}\right)\\
		& = & \frac{1}{n}\left(\sum_{L \in C_J,\ d_{J,I} >d_{J,L}}(-1)^{\vert J \cap L\vert-k-1}(d_{J,I} - d_{J,L})\right)\\
		& + & \frac{1}{n}\left(\sum_{L \in C_J,\ d_{J,I} < d_{J,L}}(-1)^{\vert J \cap L\vert-k-1}(n + (d_{J,I} - d_{J,L}))\right)\\\\
		& = &0.
	\end{eqnarray*}
	
\end{proof}

\begin{cor}\label{cor: basis}
	The set $\left\{X_J: J \in \binom{\lbrack n\rbrack}{k}^{ncyc}\right\}$ is a basis of linear functions on $\mathcal{K}_{k,n}$.
\end{cor}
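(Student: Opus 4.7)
The plan is to deduce the basis property directly from Theorem \ref{prop: sJ expansion X basis} combined with a short dimension count, rather than to prove linear independence by hand.

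First I would observe that $\binom{\lbrack n\rbrack}{k}^{ncyc}$ has cardinality $\binom{n}{k}-n$, since for $k<n$ the $n$ cyclic intervals $\lbrack j,j+k-1\rbrack$ are pairwise distinct. Since $\mathcal{K}_{k,n} \subset \mathbb{R}^{\binom{n}{k}}$ is cut out by the $n$ momentum conservation equations, one has the a priori lower bound $\dim \mathcal{K}_{k,n}^{\ast}=\dim \mathcal{K}_{k,n} \ge \binom{n}{k}-n$.

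Next I would invoke Theorem \ref{prop: sJ expansion X basis} to express each Mandelstam $\mathfrak{s}_J$ as the signed $\mathbb{Z}$-combination $\sum_{M\in C_J}(-1)^{|J\cap M|-k-1}X_M$ in $(\mathbb{R}^{\binom{n}{k}})^{\ast}$. As recalled in Section \ref{sec: geometric deformation theory}, momentum conservation is equivalent to the vanishing $X_{\lbrack j,j+k-1\rbrack}=0$ of every cyclic $X$-variable; thus, restricting the expansion to $\mathcal{K}_{k,n}$ kills every cyclic contribution, and each $\mathfrak{s}_J$ is realized as a linear combination of $\{X_M : M \in \binom{\lbrack n\rbrack}{k}^{ncyc}\}$.

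Because the $\mathfrak{s}_J$'s span $(\mathbb{R}^{\binom{n}{k}})^{\ast}$, their restrictions surject onto $\mathcal{K}_{k,n}^{\ast}$, so the $\binom{n}{k}-n$ non-cyclic $X_M$'s already span $\mathcal{K}_{k,n}^{\ast}$. A spanning set whose size does not exceed the dimension of the ambient space must be a basis, and this simultaneously forces the equality $\dim\mathcal{K}_{k,n}=\binom{n}{k}-n$, so that in particular the momentum conservation equations are linearly independent.

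The genuine obstacle was already absorbed into Theorem \ref{prop: sJ expansion X basis}. What remains here is only the bookkeeping confirmation that the cyclic $X_M$'s vanish identically on $\mathcal{K}_{k,n}$, which is immediate because they are $\mathbb{Z}$-linear combinations of the left-hand sides of the momentum conservation equations, as recorded in Section \ref{sec: geometric deformation theory}. The elegance of this approach is that linear independence and spanning both fall out of the expansion formula, once the dimensions of the two sides are compared.
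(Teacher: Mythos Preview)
Your argument is correct and matches the paper's intended deduction: the paper states the corollary without separate proof, treating it as immediate from Theorem~\ref{prop: sJ expansion X basis}, and your dimension-count plus spanning argument is exactly the implicit reasoning behind that. The one thing you make explicit that the paper leaves tacit is that the spanning property forces $\dim\mathcal{K}_{k,n}=\binom{n}{k}-n$, hence the independence of the momentum conservation equations; this is a nice bonus of writing the argument out.
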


\begin{example}
	In $\mathbb{R}^{\binom{4}{2}}$ we have 
	\begin{eqnarray*}
		\left(X_{24} - X_{14}-X_{23}+X_{13}\right)_{\lbrack \mathfrak{s}_{12}\rbrack} & = &  \frac{1}{4}(d_{e_{24},e_{12}} - d_{e_{14},e_{12}} - d_{e_{23},e_{12}} + d_{e_{13},e_{12}})\\
		& = & \frac{1}{4}(1-2-2+3)\\
		& = & 0,
	\end{eqnarray*}
	while 
	\begin{eqnarray*}
		\left(X_{24} - X_{14}-X_{23}+X_{13}\right)_{\lbrack \mathfrak{s}_{24}\rbrack} & = & \frac{1}{4}(d_{e_{24},e_{24}} - d_{e_{14},e_{24}} - d_{e_{23},e_{24}} + d_{e_{13},e_{24}})\\
		& = & \frac{1}{4}(0-3-3+2)\\
		& = & -1,
	\end{eqnarray*}
	as expected.
\end{example}

\subsection{Kinematic Deformation via the $X$-Basis}

\begin{prop}\label{prop: propagators XJ deformation}
	The kinematic deformation $\sigma_j\in \mathcal{K}_{k,n}$ defined by 
	$$X_J = \vert J\cap (1,k+1,2k+1,\ldots )\vert -k/g$$
	is pure.
\end{prop}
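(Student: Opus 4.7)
My plan is to verify the proposition in two steps: first, confirm that the prescribed data $X_J(\sigma) = |J \cap A| - k/g$, with $A = \{1, k+1, 2k+1, \ldots\} \pmod n$, actually defines a valid element $\sigma \in \mathcal{K}_{k,n}$; and second, use the $X$-basis expansion of $\mathfrak{s}_J$ from Theorem~\ref{prop: sJ expansion X basis} to show that $\mathfrak{s}_J(\sigma)$ vanishes for every non-cyclic $J$.

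For well-definedness, I set $g = \gcd(k,n)$ and observe that iterating $j \mapsto j + k \pmod n$ traverses exactly the residue class $\{1, 1+g, 1+2g, \ldots\} \pmod n$, so $A$ is a union of complete residue classes mod $g$ with $|A| = n/g$. Since $g \mid k$ and $g \mid n$, every cyclic interval $J = [j, j+k-1]$ contains exactly $k/g$ representatives from each residue class mod $g$, giving $|J \cap A| - k/g = 0$ on cyclic $J$, consistent with momentum conservation; by Corollary~\ref{cor: basis}, the prescription of the non-cyclic $X_J$'s then uniquely determines $\sigma \in \mathcal{K}_{k,n}$. For the computation, I parameterize the cubical array $C_J$ by subsets $S \subseteq I_J$ via $M_S = (J \setminus S) \cup (S - 1)$, which is a disjoint union because $s \in S \subseteq I_J$ forces $s - 1 \notin J$, and note that $|J \cap M_S| = k - |S|$. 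Plugging into Theorem~\ref{prop: sJ expansion X basis} gives
$$\mathfrak{s}_J(\sigma) = -\sum_{S \subseteq I_J}(-1)^{|S|}\bigl(|M_S \cap A| - k/g\bigr),$$
and the constant $-k/g$ contribution equals $(k/g)\sum_{S \subseteq I_J}(-1)^{|S|} = 0$ whenever $|I_J| \geq 1$. By linearity over $a \in A$, the claim reduces to showing that for each fixed $a$ the inner alternating sum $T_a := \sum_{S \subseteq I_J}(-1)^{|S|}\mathbf{1}[a \in M_S]$ vanishes whenever $J$ is non-cyclic, i.e., $|I_J| \geq 2$.

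For $T_a$ I would run a four-way case analysis, using that $a \in M_S$ iff $a \in J \setminus S$ or $a+1 \in S$. The cases are: (i) $a \in J \setminus I_J$, in which case $a \in M_S$ for every $S$ and $T_a = \sum_S(-1)^{|S|} = 0$; (ii) $a \in I_J$, giving $a \in M_S \iff a \notin S$ and $T_a = \sum_{S \subseteq I_J \setminus \{a\}}(-1)^{|S|} = 0$ since $|I_J \setminus \{a\}| \geq 1$; (iii) $a \notin J$ and $a+1 \in I_J$, giving $a \in M_S \iff a+1 \in S$ and $T_a = -\sum_{S' \subseteq I_J \setminus \{a+1\}}(-1)^{|S'|} = 0$; (iv) all remaining $a$, for which $a \notin M_S$ for every $S$. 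Each vanishing uses $(1-1)^m = 0$ for some $m \geq 1$, guaranteed by $|I_J| \geq 2$. Concluding $\mathfrak{s}_J(\sigma) = 0$ on non-cyclic $J$ is the definition of purity, and the analogous result for the shifts starting at $i = 2,\ldots,g$ follows by cyclic relabeling. The main obstacle is bookkeeping: verifying that the four cases partition $[n]$, reading off the correct membership condition for $a \in M_S$ in each case from the defining property of $I_J$, and tracking the sign in case (iii)—any one of these is easy to get wrong if the disjoint-union structure of $M_S$ is not carefully exploited.
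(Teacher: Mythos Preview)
Your proof is correct. The overall strategy matches the paper's: both invoke the $X$-basis expansion of $\mathfrak{s}_J$ from Theorem~\ref{prop: sJ expansion X basis} and evaluate at $\sigma$, reducing purity to the vanishing of an alternating sum over the cube $C_J$.

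Where you diverge is in the mechanism of cancellation. The paper argues that the alternating sum ``cancels on each translation of a given square face of that cube,'' thereby reducing to the already-known $\mathcal{K}_{2,n}$ case; this is a structural reduction, pairing off two of the $|I_J|$ toggles at a time. You instead linearize $|M_S \cap A| = \sum_{a \in A}\mathbf{1}[a \in M_S]$ and kill each $T_a$ directly by a four-way case analysis, each case collapsing to $(1-1)^m = 0$ for some $m \ge 1$. Your route is fully self-contained and does not require the $k=2$ base case as an input; the paper's route is shorter to state but leaves the square-face cancellation somewhat implicit. You also supply a well-definedness check (that the prescription vanishes on cyclic intervals, so $\sigma$ genuinely lies in $\mathcal{K}_{k,n}$) that the paper's proof does not repeat, having discussed it earlier in the text.
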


\begin{proof}
	We invoke the formula from Proposition \ref{prop: sJ expansion X basis}, evaluating
	\begin{eqnarray*}
		\mathfrak{s}_J & = & \sum_{e_M \in C_J}(-1)^{\vert J \cap M\vert-k-1}X_{M}
	\end{eqnarray*}
	at the point $\sigma_i \in \mathcal{K}_{k,n}$, by replacing $X_M$ with $(\vert M\cap \{i,i+g,\ldots, \}\vert - k/g)$, giving 
	\begin{eqnarray*}
		\mathfrak{s}_J(\sigma_i) & = & \sum_{e_M \in C_J}(-1)^{\vert J \cap M\vert-k-1}(\vert M\cap \{i,i+g,\ldots, \}\vert - k/g)
	\end{eqnarray*}
	This expression is an alternating sum over the vertices of a cube of dimension $d$, where $d$ is the number of cyclic intervals in the set $J$.  In fact, this sum cancels on each translation of a given square face of that cube, which reduces us to a cancellation for a case which we know already, for $\mathcal{K}_{2,n}$.  
\end{proof}
For example, for the base case of the proof, letting $\sigma_1 = e^{23} + e^{14} - e^{13} - e^{24} \in \mathcal{K}_{2,4}$, then $s_{24}$ evaluates to 
\begin{eqnarray*}
	s_{24} & = & -X_{24} + X_{14} + X_{23} - X_{13}\\
	& = & -(0-1) + (1-1) + (1-1) - (2-1) = 0.
\end{eqnarray*}
On the other hand, consider the expansion of $\mathfrak{s}_{246}$ at the point $\sigma_1 = e^{234} + e^{156} - e^{123} - e^{456}$.  We evaluate at $\sigma_1$, grouping the terms into two 4-term identities and obtain
\begin{eqnarray*}
	\mathfrak{s}_{246} & = & X_{135}-X_{136}-X_{145}+X_{146}-X_{235}+X_{236}+X_{245}-X_{246}\\
	& = & - (X_{246} - X_{236} - X_{245} + X_{235}) + (X_{146} - X_{136} - X_{145} + X_{135})\\
	& = &- \left((1-1) - (0-1) - (1-1) + (0-1)\right) \\
	& + & \left((2-1) - (1-1) - (2-1) + (1-1)\right)\\
	& =& 0.
\end{eqnarray*}
We conclude this section with Theorem \ref{thm: properties eta}, proved in \cite{Early:2022zny}, which makes possible the equivalence of the following statements:
\begin{enumerate}
	\item $X_I,X_J$ are compatible (in the sense that they can appear together in the same Generalized Feynman Diagram),
	\item Exactly two sign flips in $e_I-e_J$ with respect to the cyclic order $(12\cdots n)$,
	\item No octahedral face of $\Delta_{k,n}$ is triangulated, i.e. the induced subdivision of $\Delta_{k,n}$ induced by $X_I + X_J$ is matroidal. 
\end{enumerate}
\begin{thm}[\cite{Early:2022zny}]\label{thm: properties eta}
	The following hold. 
	\begin{enumerate}
		\item $X_J$ induces a subdivision of $\Delta_{k,n}$ into positroid polytopes; in particular, this subdivision is coarsest, and $X_J$ is in the direction of a ray of $\text{Trop}^+(X(k,n))$.
		\item Let $I,J \in \binom{\lbrack n\rbrack}{k}$ be given.  The subdivision of $\Delta_{k,n}$ induced by any positive linear combination $c_IX_I + c_JX_J$ is the common refinement of the two matroid subdivisions induced by $X_I$ and $X_J$, respectively.  The cells of this subdivision are matroid polytopes if and only if $e_I-e_J$ alternates sign exactly twice with respect to the standard cyclic order $(1,2,\ldots, n)$.  
	\end{enumerate}
\end{thm}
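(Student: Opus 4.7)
The plan is to separate the two parts and reduce each to a local analysis at the vertices and edges of $\Delta_{k,n}$. For Part 1, I would treat $X_J$ not as a symbolic object but as the piecewise-linear tropical hypersurface
$$\rho_J(x) \;=\; -\tfrac{1}{n}\min\bigl\{L_1(x-e_J),\ldots, L_n(x-e_J)\bigr\}$$
restricted to $\Delta_{k,n}$. The first step is to identify the regions of linearity: on each such region a single index $i$ achieves the minimum, and I would show that the vertices $e_I$ with $L_i(e_I - e_J)$ minimal form precisely a positroid, using that each $L_i$ is the cyclic ``$i$-plate'' functional appearing in Equation \eqref{eq: L plate}. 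This matches the known description of matroidal blade subdivisions, where the coarsest cells are positroid polytopes. To obtain coarseness, I would argue that any proper refinement would introduce a linear function supported on the same fan, but the one-parameter family of subdivisions induced by rescaling $X_J$ stays constant, placing $X_J$ on an extremal ray of $\mathrm{Trop}^+(X(k,n))$.

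For Part 2, I would first reduce the question to a local one: the subdivision induced by $c_I X_I + c_J X_J$ coincides with the common refinement of the subdivisions induced by $X_I$ and $X_J$ separately, since both summands are convex piecewise-linear with coarsest matroidal subdivisions by Part 1, and the sum of convex piecewise-linear functions induces the common refinement. Then the matroidal criterion becomes: every cell of the common refinement is a matroid polytope if and only if no octahedral face $\{e_{Lab}, e_{Lac}, e_{Lad}, e_{Lbc}, e_{Lbd}, e_{Lcd}\} \subset \Delta_{k,n}$ is triangulated. This octahedral condition is exactly what the three-term positive tropical Plücker relations in Equation \eqref{eq: positive tropical Plucker relations} govern.

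The crux will be translating the sign-flip count of $e_I - e_J$ into a statement about octahedral faces. I would prove the ``if'' direction by showing that exactly two sign flips in $e_I - e_J$ mean that on every relevant octahedral face the induced heights from $X_I$ and $X_J$ satisfy the positive tropical Plücker inequality rather than its opposite, so the octahedron is cut along a square (matroidal) rather than a diagonal (non-matroidal). For the ``only if'' direction, given more than two sign flips, I would extract indices $a<b<c<d$ from the sign pattern of $e_I - e_J$ whose restriction to the corresponding octahedral face violates the positive tropical Plücker inequality, producing the forbidden diagonal and hence a non-matroidal cell.

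The main obstacle is precisely this octahedral bookkeeping in Part 2: one must exhibit, from a sign-flip pattern on the $n$-cycle, explicit quadruples $(a,b,c,d)$ and common complement $L$ for which the tropical Plücker inequality fails, and conversely show that two sign flips forbid every such failure. This is essentially a finite but delicate combinatorial case analysis tied to the cyclic order, and the proof in \cite{Early:2022zny} handles it by a direct enumeration supported by the blade-arrangement framework; I would follow that route, invoking Proposition \ref{prop: planar basis identity 2n} as the $k=2$ base case where octahedral faces do not arise and the statement is vacuously true.
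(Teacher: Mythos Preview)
Your overall strategy for Part 2 --- reducing compatibility to the behavior on octahedral faces and translating the sign-flip count on $e_I - e_J$ into a tropical Pl\"ucker violation --- is exactly the route the paper describes for the proof in \cite{Early:2022zny}. The paper does not give a self-contained proof here; it only sketches that the argument proceeds by induction on the face lattice of $\Delta_{k,n}$, reducing to compatibility of ordinary propagators on faces of type $\Delta_{2,n-(k-2)}$, with the key technical step being the production of a non-matroidally subdivided octahedron whenever there are more than two sign flips. So in spirit you are aligned.

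That said, two concrete points need repair. First, your coarseness argument in Part 1 is not valid: the observation that rescaling $X_J$ leaves the induced subdivision unchanged holds for \emph{every} height function and therefore says nothing about extremality. Coarseness is equivalent to the secondary cone through $X_J$ being one-dimensional modulo lineality, and that requires an independent argument --- for instance, showing directly that the maximal cells of the induced subdivision cannot be further merged, or that $X_J$ cannot be written as a nontrivial positive combination of other rays. The paper's route (via the blade-arrangement description and the identification $X_J = X_{(\mathbf{S},\mathbf{r})}$) supplies exactly this.

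Second, your claim that the $k=2$ base case is ``vacuously true'' because ``octahedral faces do not arise'' is incorrect on both counts. The hypersimplex $\Delta_{2,n}$ does have octahedral faces for $n\ge 4$ (take $L=\emptyset$ and any four indices $a<b<c<d$), and the statement of Part 2 for $k=2$ is precisely the classical non-crossing criterion for chords in an $n$-gon: two sign flips in $e_{ij} - e_{i'j'}$ is exactly the condition that the chords $\{i,j\}$ and $\{i',j'\}$ do not cross. This is the genuine base case to which the induction on the face lattice reduces, not a vacuous one; Proposition~\ref{prop: planar basis identity 2n} alone (which concerns a single $X_{i,j}$) does not supply it.
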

The proof \cite{Early:2022zny} involves an induction on the face lattice of $\Delta_{k,n}$, reducing the computation to the compatibility of standard propagators on faces $\Delta_{2,n-(k-2)}$; it requires a careful accounting of how the octahedral faces of $\Delta_{k,n}$ are subdivided by each $X_J$.  The main technical point is the following: if $e_I-e_J$ flips sign more than twice, produce an octahedral face of $\Delta_{k,n}$ which is subdivided non-matroidally by $X_I + X_J$ (into four tetrahedra).

The compatibility relations in Theorem \ref{thm: properties eta} are a valuable tool for the computation of residues and deformations; in particular, it comes in handy in Section \ref{sec: factorization m48} for the task of reassembling the two copies of $m^{(4)}_6$ (and from that, the 6-point NLSM amplitudes) from their inverse propagators.

\section{Mixed $3$-Amplitudes and a Proof of Life}\label{sec: mixed}
In this section, we assume that $n$ is even and introduce a mixed deformation which reveals an embedding of the $n$-point NLSM into a mixed $(3,n+2)$ CEGM NLSM. Let us begin with the mixed deformation of $m^{(3)}_6$ given by 
$$\sigma = e^{235} + e^{145} + e^{136} - e^{123} - e^{345} - e^{156},$$
or equivalently
$$X_{ijk} \mapsto X_{ijk} + \delta(\vert ijk\cap 135\vert-2).$$
We emphasize that this deformation is \textit{not} zero preserving; as such, it may be expected to give rise to mixed amplitudes.  This is indeed what happens on residues.
The mixed $3$-amplitude is given by 
\begin{eqnarray*}
	\mathcal{A}^{(3),\sigma}_6 & = & \frac{1}{X_{134} X_{136}}+\frac{1}{X_{125} X_{145}}+\frac{1}{X_{134} X_{145}}+\frac{1}{X_{125} X_{235}}+\frac{1}{X_{125} X_{356}}+\frac{1}{X_{125} X_{134}}\\
	& + & \frac{1}{X_{136} X_{356}}+\frac{1}{X_{235} X_{356}}+\frac{1}{X_{134} X_{356}}\\
	& - & \frac{X_{124}+X_{135}}{X_{125} X_{134} X_{145}}-\frac{X_{135}+X_{146}}{X_{134} X_{136} X_{145}}-\frac{X_{135}+X_{245}}{X_{125} X_{145} X_{235}}-\frac{X_{135}+X_{236}}{X_{136} X_{235} X_{356}}\\
	& - &\frac{X_{135}+X_{256}}{X_{125} X_{235} X_{356}}-\frac{X_{135}+X'_{246}}{X_{125} X_{134} X_{356}}-\frac{X_{135}+X_{346}}{X_{134} X_{136} X_{356}}
\end{eqnarray*}
where
$$X'_{246} +X_{246} = X_{124}+X_{256}+X_{346}.$$
We report some interesting behavior on residues.  All first-order residues are naturally identified with mixed $\delta$-deformations of the 6-point biadjoint scalar and are accordingly mixed amplitudes.  For example,
\begin{eqnarray*}
	\text{Res}_{X_{134}=0}\left(\mathcal{A}^{(3),\sigma}_6\right) & = & \frac{1}{X_{136}}+\frac{1}{X_{145}}+\frac{1}{X_{356}}+\frac{1}{X_{125}}\\
	& - & \frac{X'_{246}+X_{135}}{X_{125} X_{356}}-\frac{X_{124}+X_{135}}{X_{125} X_{145}}-\frac{X_{135}+X_{146}}{X_{136} X_{145}}+\frac{X_{135}+X_{346}}{X_{136} X_{356}}
\end{eqnarray*}
After an identification of kinematic invariants, this is the same as the mixed amplitude
$$\frac{1}{X_{26}}+\frac{1}{X_{35}}+\frac{1}{X_{46}}+\frac{1}{X_{13}}-\frac{X_{14}+X_{36}}{X_{13} X_{46}}-\frac{X_{15}+X_{36}}{X_{13} X_{35}}-\frac{X_{25}+X_{36}}{X_{26} X_{35}}-\frac{X_{24}+X_{36}}{X_{26} X_{46}},$$
which arises from the 6-point biadjoint scalar using the deformation
$$\sigma =e^{16}+e^{34} - e^{13} - e^{46}$$
or equivalently
$$X_{ij} \mapsto X_{ij} + \delta(\vert ij \cap 1245\vert-1).$$
All codimension-2 residues are mixed 5-point amplitudes of the same form.  For example,
\begin{eqnarray*}
	\text{Res}_{X_{136}}\left(\text{Res}_{X_{245}}(\mathcal{A}^{(3),\sigma}_6)\right) & = & 1-\frac{X_{135}+X_{146}}{X_{145}}-\frac{X_{135}+X_{346}}{X_{356}}\\
	\text{Res}_{X_{146}}\left(\text{Res}_{X_{245}}(\mathcal{A}^{(3),\sigma}_6)\right) & = & 1-\frac{X'_{246}+X_{135}}{X_{125}}-\frac{X_{135}+X_{346}}{X_{136}},
\end{eqnarray*}
both of which are identified with a 5-point mixed amplitude (compare to \cite[Equation 2.21]{Cachazo:2016njl}), namely
$$A_5(1^\phi,2^\phi,3^\phi,4^\Sigma,5^\Sigma) = 1-\frac{X_{14}+X_{35}}{X_{13}}-\frac{X_{24}+X_{35}}{X_{25}},$$
obtained via the mixed kinematic shift $\sigma \in \mathcal{K}_{2,5}$ given by 
$$\sigma = e^{13} + e^{45} -e^{15} - e^{34}.$$
See also \cite{Low:2018acv}.  Next, we provide a proof of life: we prove that NLSM amplitudes live on residues of certain mixed generalized NLSM amplitudes.  Supposing that $n$ is even, we propose an embedding of the $(n-2)$-particle NLSM amplitude as a residue of the mixed deformation $\sigma$ of $m^{(3)}_n$ given by 
\begin{eqnarray}\label{eqn: Xijk mixed deformation}
	X_{ijk} & \mapsto & X_{ijk} + \delta(\vert 135\cdots n-1\vert -1).
\end{eqnarray}

\subsection{Residual Embeddings}

As a first step, we recall from \cite{Cachazo:2022vuo} the residual embedding of $m^{(2)}_n$ in $m^{(3)}_n$.

\begin{thm}[\cite{Cachazo:2022vuo}]\label{thm: residual embedding}
	The sequence of residues of $m^{(3)}_n$ where 
	\begin{eqnarray}\label{eq: residual embedding}
		X_{23n} = X_{34n} = \cdots = X_{n-4,n-3,n}=0
	\end{eqnarray}
	is equal to $m^{(2)}_n$ after an identification of kinematic invariants.
\end{thm}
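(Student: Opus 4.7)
The plan is to use the standard residue principle for CEGM amplitudes: the iterated residue of $m^{(k)}_n$ at a pairwise compatible collection of planar propagators restricts the GFD sum to those generalized Feynman diagrams that contain every prescribed propagator, with the corresponding factors stripped. My first task is to verify that the chain $\{X_{j,j+1,n}:j=2,\ldots,n-4\}$ is simultaneously realizable. Using Theorem \ref{thm: properties eta}, for $j<j'$ the vector $e_{\{j,j+1,n\}}-e_{\{j',j'+1,n\}}$ reduces to $e_j+e_{j+1}-e_{j'}-e_{j'+1}$ (or to $e_j-e_{j+2}$ when $j'=j+1$), whose nonzero entries form at most two cyclic runs and hence flip sign exactly twice. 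Consequently the entire chain is compatible and embeds in a single matroidal subdivision of $\Delta_{3,n}$; moreover the dimension count $\dim X(3,n)-(n-5)=n-3=\dim X(2,n)$ confirms that the residue surface has the expected codimension to host $m^{(2)}_n$.

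The core step is to establish a bijection between the GFDs of $m^{(3)}_n$ extending this chain and the planar trivalent trees on the cyclic order $(1,2,\ldots,n)$ enumerating $m^{(2)}_n$. I would proceed by induction on $n$, using the 2-split factorization of the CEGM residue at the last propagator $X_{n-4,n-3,n}$ to peel off one element of the chain at a time: this residue factors as a product of lower-point CEGM amplitudes in which one factor contributes only trivial combinatorics while the other reduces the problem to a shorter chain. Continuing until the chain is exhausted leaves a $(3,5)$ residue that is identified with $m^{(2)}_5$ via the standard parity duality $m^{(3)}_5\equiv m^{(2)}_5$. Once the GFD bijection is in hand, the matching of kinematic invariants is pure substitution: each surviving $(3,n)$ propagator $X_I$ is expanded via Theorem \ref{prop: sJ expansion X basis} into Mandelstam invariants, the relations $X_{j,j+1,n}=0$ are imposed, and the result is recognized as a $(2,n)$ propagator after momentum conservation.

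The main obstacle is the 2-split factorization step that drives the induction: while one expects the residue at $X_{n-4,n-3,n}$ to factor cleanly into a product of lower-point CEGM amplitudes of which one factor enters trivially into the identification, verifying this requires detailed control over how CEGM residues behave at ``flag-type'' compatible propagators involving the repeated index $n$. An alternative route that sidesteps the combinatorial bijection is to work directly with the Global Schwinger Parameterization of Equation \eqref{eq: GSP0}: each iterated residue extracts the leading contribution from the region of the tropical integration where the corresponding Pl\"{u}cker coordinate dominates the potential $\mathcal{F}^{(3)}_n$, and after a linear change of tropical variables one would show that the limiting integrand agrees term-by-term with the Schwinger parameterization of $m^{(2)}_n$. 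Either path should suffice, but the analytic route has the advantage of producing the identification of kinematic invariants as an automatic byproduct of the change of variables rather than as a separate bookkeeping exercise.
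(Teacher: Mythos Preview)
Your proposal takes a genuinely different route from the paper. The paper does not argue combinatorially via GFDs, factorization, or induction at all; instead it recalls the geometric argument from \cite{Cachazo:2022vuo}: the residue conditions $X_{j,j+1,n}=0$ force the $n$ points in $\mathbb{P}^2$ to degenerate so that $1,2,\ldots,n-3$ become collinear on a line $L$, while $n-2,n-1,n$ remain generic. The three generic points are then projected in pairs onto $L$ by a birational map, producing $n$ marked points on $\mathbb{P}^1$; on this degenerate locus the CEGM integrand collapses to the CHY integrand for $m^{(2)}_n$. The paper also remarks that this realizes $X^+(2,n)$ explicitly as a face of the positive configuration space $X^+(3,n)$.

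Compared to your plan, the paper's approach is more direct: the identification of kinematic invariants and the reduction of the integrand happen simultaneously as a consequence of the geometric degeneration, with no need for an inductive peeling of propagators or a separate GFD bijection. Your compatibility check via Theorem~\ref{thm: properties eta} is correct and is a nice sanity check, but the paper never needs it because the geometric picture makes the simultaneous vanishing manifest. The weakness in your plan is exactly where you flag it: the ``2-split factorization'' you invoke to drive the induction is not a standard off-the-shelf lemma for $k=3$ residues at propagators of the form $X_{j,j+1,n}$, and making it precise would be most of the work. Your alternative GSP route is closer in spirit to the paper's analytic argument, but the paper carries this out on the configuration-space integral itself rather than on its tropicalization.
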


The main idea of the proof in \cite{Cachazo:2022vuo} was to construct a fibration over a degenerate space of $n$-gons, where $n-3$ of the vertices are collinear; then the CEGM integral was shown to reduce on the residue prescribed in Equation \eqref{eq: residual embedding} to the usual CHY integral for $m^{(2)}_n$ on $X^+(2,n)$.

This degenerate space of $n$-gons is depicted in Figure \ref{fig:residualembedding3n}, in which  points $1,2,\ldots, n-3$ are on a line $L$, while points $n-2,n-1,n$ are generic.  These three points map to three points on $L$ according to the birational map described in Figure \ref{fig:residualembedding3n}, completing the identification.  This gives a second proof, realizing $X^+(2,n)$ explicitly as a face of $X^+(3,n)$.

\begin{figure}[h!]
	\centering
	\includegraphics[width=0.24\linewidth]{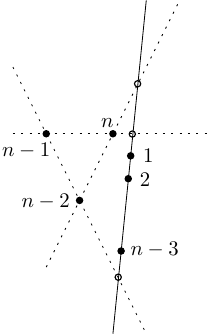}
	\caption{The mechanics of Theorem \ref{thm: residual embedding}.  Evaluating the given residue localizes $n$ points in $\mathbb{P}^2$ to the degenerate configuration where points $1,2,\ldots, n-3$ are collinear in a line $L$.  The remaining $3$ points project in pairs onto $L$, determining a point in $X(2,n) = \mathcal{M}_{0,n}$.}
	\label{fig:residualembedding3n}
\end{figure}

\begin{cor}
	The sequence of residues of $m^{(3)}_n$, where each of the collection of inverse propagators 
	$$X_{23n},X_{34n},X_{45n},\ldots, X_{n-4,n-3,n},X_{n-3,n-2,n}$$
	vanishes, is equal to $m^{(2)}_{n-1}$ after an identification of kinematic invariants.  A natural identification of inverse propagators is $X_{ij} = X_{ijn}$.  The same residue of $\mathcal{A}^{(3),\sigma}_{n}$ is a mixed amplitude on $n-2$ particles $1,2,\ldots, n-2$.
\end{cor}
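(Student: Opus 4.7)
The plan is to bootstrap from Theorem \ref{thm: residual embedding} by taking one additional residue. First, the sequence $X_{23n} = X_{34n} = \cdots = X_{n-4,n-3,n} = 0$ reduces $m^{(3)}_n$ to $m^{(2)}_n$ as in the theorem, via the fibration over a degenerate $n$-gon configuration in which the points $1, 2, \ldots, n-3$ lie on a line $L$ while the points $n-2, n-1, n$ remain generic. It remains to analyze the single additional residue at $X_{n-3,n-2,n}$ and verify that it further reduces the result to $m^{(2)}_{n-1}$.

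The key geometric observation is that this residue forces the triple $(n-3, n-2, n)$ to be collinear. Since $n-3$ is already on $L$, the condition amounts to requiring that the line joining the generic points $n-2$ and $n$ pass through $n-3$, so that the projection on $L$ of the pair $(n-2, n)$ coincides with the labeled point $n-3$. Two of the $n$ labels are effectively merged, producing an $(n-1)$-point arrangement on $\mathbb{P}^1$, and the CEGM integral localizes onto $X^+(2, n-1)$ to yield $m^{(2)}_{n-1}$. A dimension check supports this: $\dim X(3, n) - \dim X(2, n-1) = n-4$, matching the total number of residues $n-4$. Under the natural identification $X_{ij} = X_{ijn}$, the $n-1$ labels of the reduced amplitude correspond to the first $n-1$ labels of the $(3, n)$ configuration, with point $n$ playing a distinguished role in the fibration.

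For the mixed amplitude statement, one tracks the deformation $X_{ijk} \mapsto X_{ijk} + \delta(|ijk \cap \{1, 3, \ldots, n-1\}| - 1)$ through the same sequence of residues. Since $n$ is even, $n \notin \{1, 3, \ldots, n-1\}$, so the identification $X_{ij} = X_{ijn}$ restricts the shift to $X_{ij} \mapsto X_{ij} + \delta(|ij \cap \{1, 3, \ldots, n-1\}| - 1)$ on $\mathcal{K}_{2, n-1}$. This is a genuinely mixed kinematic shift on $n-1$ labels, splitting them asymmetrically into $n/2$ odd-indexed and $n/2 - 1$ even-indexed labels. Extracting the leading coefficient as $\delta \to \infty$ produces, in direct analogy with the five-point example $A_5(1^\phi, 2^\phi, 3^\phi, 4^\Sigma, 5^\Sigma)$ discussed earlier in this section, a mixed amplitude with effectively $n-2$ active particles.

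The main obstacle is making step two rigorous: one cannot invoke the standard biadjoint factorization of $m^{(2)}_n$ at $X_{n-3,n-2}$, since under the identification $X_{ij} = X_{ijn}$ this becomes a cyclic-interval propagator that vanishes identically by momentum conservation. The correct reduction must instead be established as an additional degeneration of the underlying configuration space, either by extending the fibration argument of \cite{Cachazo:2022vuo} to exhibit $X^+(2, n-1)$ as a boundary face of $X^+(3, n)$, or by a direct calculation via the Global Schwinger Parameterization \eqref{eq: GSP0} showing that the tropical potential $\mathcal{F}^{(3)}_n$ degenerates correctly onto $\mathcal{F}^{(2)}_{n-1}$. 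Once this is in place, the mixed amplitude statement follows by tracking the leading-$\delta$ coefficient through the reduced integral and identifying the result with a mixed amplitude on $n-2$ particles.
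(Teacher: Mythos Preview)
Your approach is essentially the one the paper intends: the Corollary is stated without proof precisely because it follows from Theorem~\ref{thm: residual embedding} by taking one further residue, and your geometric picture --- that $X_{n-3,n-2,n}=0$ forces the line through $n-2$ and $n$ to meet $L$ at the already-labeled point $n-3$, so the projected image of the pair $(n-2,n)$ collides with $n-3$ and the $n$ marked points on $L$ collapse to $n-1$ --- is exactly the right mechanism. Your treatment of the deformation, restricting $X_{ijk}\mapsto X_{ijk}+\delta(|ijk\cap\{1,3,\ldots,n-1\}|-1)$ to $X_{ij}=X_{ijn}$ on $\mathcal{K}_{2,n-1}$, is likewise the argument used in the paper for the subsequent Theorem.

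Where you go astray is in the ``main obstacle'' paragraph. You assume the identification $X_{ij}=X_{ijn}$ already holds for the $m^{(2)}_n$ produced by Theorem~\ref{thm: residual embedding}, and then worry that $X_{n-3,n-2,n}$ maps to the cyclic (hence identically zero) variable $X_{n-3,n-2}$. But the identification $X_{ij}=X_{ijn}$ is asserted only in the Corollary, for the \emph{output} $m^{(2)}_{n-1}$; the identification coming out of Theorem~\ref{thm: residual embedding} is the geometric one via pairwise projection of $n-2,n-1,n$ onto $L$. Under that projection, $X_{n-3,n-2,n}$ is a genuine pole of $m^{(2)}_n$: it is precisely the collision of the projected point $\overline{(n-2,n)}\cap L$ with its cyclic neighbour $n-3$, so the standard 2-particle factorization $m^{(2)}_n\to m^{(2)}_{n-1}\times m^{(2)}_3=m^{(2)}_{n-1}$ \emph{does} apply. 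After this residue the surviving kinematics are naturally labeled by $\{1,\ldots,n-1\}$ with $X_{ij}=X_{ijn}$, which is the identification stated in the Corollary. So there is no obstacle: your geometric argument is already the proof, and the alternative routes you propose (extending the fibration or degenerating the Global Schwinger Parameterization) are unnecessary.
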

We need to take one more residue in order to find the NLSM amplitude.
\begin{thm}
	The additional residue where $X_{1,n-2,n}=0$ is identified with the $(n-2)$-point NLSM amplitude.
\end{thm}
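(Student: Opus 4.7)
The plan is to extend the residual-embedding framework of Theorem \ref{thm: residual embedding} and its Corollary by one further residue, and then identify the induced deformation of the resulting $(n-2)$-point amplitude with the canonical NLSM kinematic shift \eqref{eq: delta deformation 2n}.

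First, I would observe that the residues in question are insensitive to the deformation $\sigma$. Since $n$ is even, the odd set $O := \{1,3,\ldots, n-1\}$ does not contain $n$, whence $|\{j,j+1,n\}\cap O|=1$ for every $j = 2,\ldots, n-3$ and $|\{1,n-2,n\}\cap O|=1$. Consequently the mixed shift \eqref{eqn: Xijk mixed deformation} assigns shift $0$ to each propagator $X_{23n},X_{34n},\ldots, X_{n-3,n-2,n}$ and to $X_{1,n-2,n}$. These propagators therefore appear in $\mathcal{A}^{(3),\sigma}_n$ with their undeformed values, so the iterated residue is well-defined and commutes with the leading-order extraction in $\delta\to\infty$; it suffices to compute the residue of the full $\sigma$-shifted $(m^{(3)}_n)_\sigma$ and extract the leading coefficient afterwards.

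Second, the residues of the Corollary localize the CEGM integral onto a face of $X^+(3,n)$ where, via the identification $X_{ij}=X_{ijn}$, the undeformed integral collapses to $m^{(2)}_{n-1}$, while the deformed version collapses to a $\sigma$-shifted amplitude on $n-2$ effective particles. The additional residue $X_{1,n-2,n}=0$ corresponds on this face to collapsing one more dimension in $X^+(2,n-1)$, realising an embedding of $X^+(2,n-2)$ as a further codimension-1 face (geometrically: points $1$ and $n-2$ both lie on the line $L$ from Figure \ref{fig:residualembedding3n}, so the collinearity of $(1,n-2,n)$ pins $n$ to $L$ as well). The fibration argument of \cite{Cachazo:2022vuo} then shows that the $\sigma$-shifted CEGM integrand reduces on this double residue to the CHY integrand for a $\delta$-deformed $m^{(2)}_{n-2}$.

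Third, I would identify the induced kinematic shift. Under the iterated identification of kinematic invariants, the weight $|ijk\cap O|-1$ on a surviving triple $ijn$ descends to $|\{i,j\}\cap O|-1$, and after the second collapse the only odd labels remaining on the $(2,n-2)$-side are $\{1,3,\ldots, n-3\}$, so the induced shift is precisely $X_{ij}\mapsto X_{ij}+\delta(|ij\cap\{1,3,\ldots, n-3\}|-1)$, as in \eqref{eq: delta deformation 2n}. Extracting the leading coefficient in $\delta\to\infty$ then yields $\mathcal{A}^{NLSM}_{n-2}$ by definition. The main obstacle is the careful bookkeeping of this last step: one must verify that the residual odd-parity set is cleanly $\{1,3,\ldots, n-3\}$ with no spurious contributions inherited from the absorbed labels $n-1,n$, and that the surviving cyclic order matches the conventional NLSM labeling, so that the equality with $\mathcal{A}^{NLSM}_{n-2}$ holds on the nose rather than up to a cyclic relabeling. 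These are largely combinatorial consistency checks, but they form the crux of the argument; the geometric collapse and fibration steps are direct extensions of the methods of \cite{Cachazo:2022vuo}.
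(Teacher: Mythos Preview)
Your proposal is correct and follows essentially the same route as the paper: restrict the mixed shift under $X_{ij}=X_{ijn}$ to the $(n-1)$-point level via the Corollary, then take one further residue at $X_{1,n-2,n}=X_{1,n-2}=s_{1,n-1}$. The paper's argument is actually terser than yours---it simply records that $X_{1,n-2}$ is an undeformed pole of the resulting mixed $(n-1)$-point amplitude and asserts the conclusion---whereas you explicitly check that all residue propagators carry zero $\delta$-shift and that the induced deformation on the surviving indices $\{1,\ldots,n-2\}$ is precisely the canonical NLSM shift $X_{ij}\mapsto X_{ij}+\delta(\vert ij\cap\{1,3,\ldots,n-3\}\vert-1)$.
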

\begin{proof}
	We start with the mixed deformation 
	\begin{eqnarray}\label{eq: mixed deformation}
		X_{ijk} & \mapsto & X_{ijk} + \delta(\vert ijk\cap 135\cdots n-1\vert-1)
	\end{eqnarray}
	and restrict to the case at hand, i.e. when $k=n$.  Note that this gives the usual pure NLSM-type deformation, since we have the specialization under $X_{ij} = X_{ijn}$ of Equation \eqref{eq: mixed deformation} to 
	$$X_{ij} \mapsto X_{ij} + \delta(\vert ij\cap 135\cdots n-1\vert -1).$$

	This gives rise to a mixed amplitude consisting of pions and biadjoint scalars.  Noting that under the identification $X_{1,n-2,n} = X_{1,n-2} = s_{1,n-1}$, which is a pole of $m^{(2)}_{n-1}$, it follows that taking the additional residue where $X_{1,n-2,n}=0$ results in an $(n-2)$-point NLSM amplitude. 
\end{proof}
\begin{figure}[h!]
	\centering
	\includegraphics[width=0.6\linewidth]{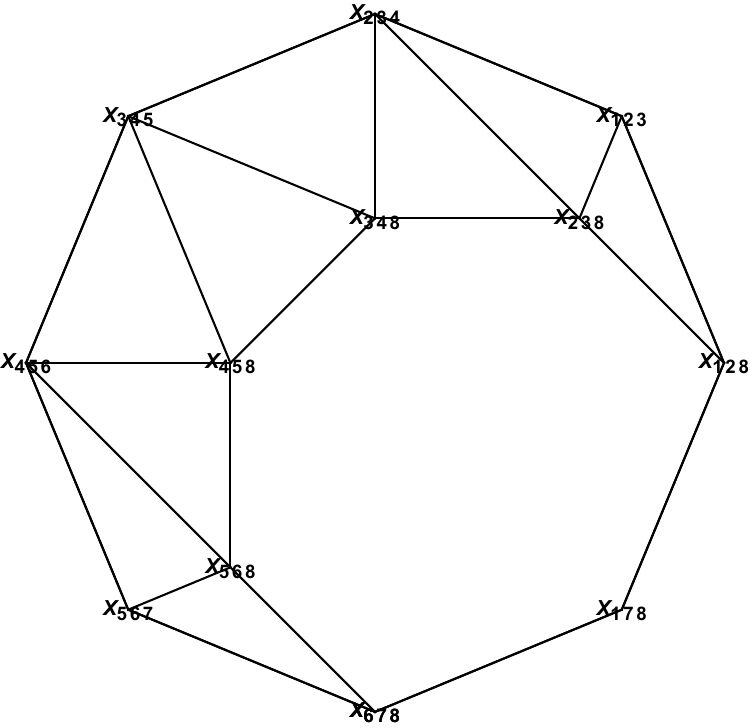}
	\caption{Combinatorial interpretation of a compatible collection of inverse propagators involved in the residual embedding of $m^{(2)}_{7}$ amplitude inside $m^{(3)}_{8}$. Deforming this with the mixed kinematic shift $\sigma$ in Equation \eqref{eqn: Xijk mixed deformation} and taking one more residue where $X_{168}=0$ or $X_{278}=0$ gives a residual embedding of the $6$-point NLSM amplitude into a mixed $8$-point CEGM amplitude.}
	\label{fig:residualembeddingnlsm6in8}
\end{figure}
For example, with $\sigma \in \mathcal{K}_{3,8}$ defined by 
$$X_{ijk} \mapsto X_{ijk} + \delta\left(\vert ijk \cap 1357\vert -1\right),$$
we would like to compute the leading order in the $\delta\rightarrow\infty$ limit and take the residue where $X_{238} = X_{348} = X_{458} = X_{568}=0$; instead, since the operations commute we first take the residue and then deform, and find 
\begin{eqnarray*}
	\text{Res}_{\bullet=0}(\mathcal{A}^{(3),\sigma}_8) &= &  2- \frac{X_{138}+X_{248}}{X_{148}}-\frac{X_{248}+X_{358}}{X_{258}}-\frac{X_{358}+X_{468}}{X_{368}}-\frac{X_{468}+X_{578}}{X_{478}}\\
	& - & \frac{X_{138}+X_{158}+X_{248}+X_{268}+X_{358}+X_{468}}{X_{168}}\\
	& - & \frac{X_{248}+X_{268}+X_{358}+X_{378}+X_{468}+X_{578}}{X_{278}}\\
	& + & \frac{\left(X_{158}+X_{268}\right) \left(X_{248}+X_{358}\right)}{X_{168} X_{258}}+\frac{\left(X_{268}+X_{578}\right) \left(X_{248}+X_{358}\right)}{X_{258} X_{278}}\\
	& + & \frac{\left(X_{138}+X_{268}\right) \left(X_{358}+X_{468}\right)}{X_{168} X_{368}}+\frac{\left(X_{268}+X_{378}\right) \left(X_{358}+X_{468}\right)}{X_{278} X_{368}}\\
	& + & \frac{\left(X_{138}+X_{248}\right) \left(X_{468}+X_{578}\right)}{X_{148} X_{478}} + \frac{\left(X_{248}+X_{378}\right) \left(X_{468}+X_{578}\right)}{X_{278} X_{478}}\\
	& + & \frac{\left(X_{138}+X_{248}\right) \left(X_{158}+X_{468}\right)}{X_{148} X_{168}}. 
\end{eqnarray*}
This is identified via $X_{ij} = X_{ij8}$ with the mixed 7-point amplitude induced by 
$$X_{ij} \mapsto X_{ij} + \delta(\vert ij\cap 1357\vert  -1).$$
Moreover, the residue where $X_{278}=0$ is identified with the 6-point NLSM amplitude,
\begin{eqnarray*}
	\text{Res}_{X_{168}=0}(\mathcal{A}^{(3),\sigma})_6 & = & -X_{138}-X_{158}-X_{248}-X_{268}-X_{358}-X_{468}\\
	& + & \frac{\left(X_{158}+X_{268}\right) \left(X_{248}+X_{358}\right)}{X_{258}}+\frac{\left(X_{138}+X_{248}\right) \left(X_{158}+X_{468}\right)}{X_{148}}\\
	& + & \frac{\left(X_{138}+X_{268}\right) \left(X_{358}+X_{468}\right)}{X_{368}}.
\end{eqnarray*}

For comparison,
\begin{eqnarray*}
	\mathcal{A}^{(2),NLSM}_6 & = & X_{13}+X_{15}+X_{24}+X_{26}+X_{35}+X_{46}-\frac{\left(X_{15}+X_{26}\right) \left(X_{24}+X_{35}\right)}{X_{25}}\\
	& - & \frac{\left(X_{13}+X_{24}\right) \left(X_{15}+X_{46}\right)}{X_{14}}-\frac{\left(X_{13}+X_{26}\right) \left(X_{35}+X_{46}\right)}{X_{36}}.
\end{eqnarray*}

In the next section we present a brief glimpse into factorization for $k=4$ generalized NLSM amplitudes.

\subsection{Factorizing an $8$-point NLSM $4$-Amplitude}\label{sec: factorization m48}

We consider the $\delta$-deformation 
\begin{eqnarray}\label{eqn: deformation 48}
	X_{ijk\ell} & \mapsto & X_{ijk\ell} + \delta\left(\vert 1357\cap ijk\ell\vert-2\right),
\end{eqnarray}
which determines the pure kinematic deformation $\sigma \in \mathcal{K}_{4,8}$ given by  
$$\sigma = -e^{1234}+e^{1238}-e^{1278}+e^{1678}+e^{2345}-e^{3456}+e^{4567}-e^{5678}.$$
Note that all $8$ terms appear.
We will argue that the residue of $m^{(4)}_8$ and of its generalized NLSM deformation in Equation \eqref{eqn: deformation 48}, where 
$$X_{1456},X_{1256},X_{1236}=0,$$
are respectively $m^{(4)}_6 \cdot m^{(4)}_6$ and $\mathcal{A}^{NLSM}_6\cdot \mathcal{A}^{NLSM}_6$.  Recall that $m^{(k)}_{k+2} \equiv m^{(2)}_{k+2}$.

Let us recall briefly the characterization of compatibility: two $k$-propagators $X_{I},X_J$ are \textit{compatible} provided that their sum, thought of as an element\footnote{Strictly speaking, modulo the lineality subspace.} of $\mathbb{R}^{\binom{n}{k}}$, satisfies the 3-term positive tropical Pl\"{u}cker relations, see Equation \eqref{eq: positive tropical Plucker relations}.  With this, we can proceed with our construction of the residue.

The CEGM amplitude $m^{(4)}_8$ is a known rational function in exactly 360 inverse $4$-propagators \cite{He:2020ray}; among these, exactly 18 are compatible with all three $X$-variables  $X_{1456},X_{1256},X_{1236}$, and they are split up into two groups of 9.  These two groups are mutually compatible and each forms a copy of $m^{(4)}_6$.

The first copy involves $X$-variables
$$X_{1235},X_{1245},X_{1246},X_{1345},X_{1346},X_{1356},X_{2346},X_{2356},X_{2456},$$
which are exactly those for $m^{(4)}_6$, while the second requires a little bit more work; its poles are encoded the 6 $X$-variables
$$X_{1237},X_{1267},X_{1268},X_{1567},X_{1568},X_{4568},$$
together with 
$$	X_{1236}-X_{1368}+X_{1378}+X_{1568},\ X_{1236}-X_{2368}+X_{2378}+X_{4568},\ X_{1256}-X_{2568}+X_{2678}+X_{4568}.$$
We choose an identification between these and the inverse propagators for $m^{(2)}_6$; the identification with $X_{(\mathbf{S},\mathbf{r})}$ is unique.  We have
$$
\begin{array}{ccc}
	X_{13} & X_{1268} & X_{(1278_3 3456_1)}\\
	X_{14} & X_{1568} & X_{(178_2 23456_2)} \\
	X_{15} & X_{4568} & X_{(123456_3 78_1)}\\
	X_{35} & X_{1256}-X_{2568}+X_{2678}+X_{4568} & X_{(12_1 78_1 3456_2)} \\
	X_{36} & X_{1267} & X_{(128_2 34567_2)}\\
	X_{46} & X_{1567} &  X_{(18_1 234567_3)}\\
	X_{24} & X_{1236}-X_{1368}+X_{1378}+X_{1568} & X_{(178_2 456_1 23_1)}\\
	X_{25} & X_{1236}-X_{2368}+X_{2378}+X_{4568}  & X_{(123_2 78_1 456_1)} \\
	X_{26} & X_{1237} & X_{(1238_3 4567_1)}\\
\end{array}
$$
The identification between the left and middle columns is based on the choice of a graph isomorphism of the compatibility complexes of the $X_{ij}$ and the 9 inverse $4$-propagators above.  The identification between columns 2 and 3 is an identity modulo momentum conservation based on Appendix \ref{sec: two formulas for k propagators}.

Most importantly, we notice that the deformations 
$$X_{ij} \mapsto X_{ij} + \delta(\vert ij \cap 246\vert -1)$$
and 
$$X_{ijk\ell} \mapsto X_{ijk\ell} + \delta(\vert ijkl \cap 1357\vert -2)$$
act identically on the left and middle columns, respectively, and we have found an identification between our residue of $\mathcal{A}^{(4),\sigma}_8$ and the 6-point NLSM amplitude.

\section{Hard and Soft Limits}\label{sec: hard and soft limits}
In this section, we show that the generalized NLSM amplitude from Equation \eqref{eq: generic GNLSM Intro A}, given by 
\begin{eqnarray}\label{eq: generic GNLSM Intro}
	\mathcal{A}^{(3),\sigma}_{6}& = & X_{125} X_{134}+X_{146} X_{134}+X_{124} X_{145}+X_{136} X_{145}+X_{136} X_{235}+X_{145} X_{235}\nonumber\\
	& +& X_{146} X_{236}+X_{245} X_{236}+X_{125} X_{245}+X_{146} X_{245}+X_{124} X_{256}+X_{235} X_{256}\nonumber\\
	& + & X_{124} X_{346}+X_{136} X_{346}+X_{256} X_{346}+X_{125} X_{356}+X_{134} X_{356}+X_{236} X_{356}\nonumber\\
	& -& \frac{\left(X_{124}+X_{236}\right) \left(X_{146}+X_{256}\right) \left(X_{245}+X_{346}\right)}{X_{246}}\nonumber\\
	& - & \frac{\left(X_{125}+X_{136}\right) \left(X_{134}+X_{235}\right) \left(X_{145}+X_{356}\right)}{X_{135}}\\
	& - & \frac{\left(X_{134}+X_{256}\right) \left(X_{125}+X_{346}\right) \left(X_{124}+X_{356}\right)}{X'_{246}}\nonumber\\
	& - & \frac{\left(X_{146}+X_{235}\right) \left(X_{145}+X_{236}\right) \left(X_{136}+X_{245}\right)}{X'_{135}},\nonumber
\end{eqnarray}
vanishes identically in all hard and soft limits; we now describe the computation using certain hard and soft boundary operators, respectively $\partial_{\ell_1}$ and $\partial_{\ell_0}$.  The general structure mirrors the face lattice of the hypersimplex $\Delta_{k,n}$, whose $2n$ facets are $n$ copies of $\Delta_{k-1,n-1}$ and $\Delta_{k,n-1}$, where $x_j=1$ and $x_j=0$, respectively.  We point out that soft limits for CEGM amplitudes have been studied in \cite{GarciaSepulveda:2019jxn}.

For each $\ell=1,\ldots, n$, introduce the following ``soft'' degeneration:
\begin{eqnarray}
	\mathfrak{s}^{(\ell_0)}_J(\lambda) & = & \begin{cases}
		\mathfrak{s}_J,\ \ell\not\in J,\\
		\lambda \mathfrak{s}_J,\  \ell \in J.
	\end{cases}
\end{eqnarray}
Similarly, for each $\ell=1,\ldots, n$, introduce the following ``hard'' degeneration:
\begin{eqnarray}
	\mathfrak{s}_J^{(\ell_1)}(\lambda) & = & \begin{cases}
		\mathfrak{s}_J,\ \ell\in J,\\
		\lambda \mathfrak{s}_J,\  \ell \not\in J\\
	\end{cases}
\end{eqnarray}
In the $X$-basis, the hard degeneration has a particularly simple expression
\begin{eqnarray}
	X_{J}^{(\ell_1)}(\lambda) & = & \lambda X_J + (1-\lambda)X^{(\ell_1)}_{J'}
\end{eqnarray}
where $X^{(\ell_1)}_{J'}$ is an $X$-variable on the kinematic subspace
$$\mathcal{K}_{k-1,n-1} \simeq \left\{(\mathfrak{s})\in \mathcal{K}_{k,n}: s_{J}=0\text{ whenever } \ell \not\in J\right\},$$
and where $J'$ is obtained from $J$ by deleting the cyclic successor in $J$ to $\ell$, see \cite{Early:2022zny}.  When clear from context, we write simply $X^{(\ell)}_{J'}$ for the hard degeneration in the particle $\ell$.  If $J'$ is a cyclic interval in $\lbrack n\rbrack\setminus \{\ell\}$, it follows that $X^{(\ell_1)}_{J'}\equiv 0$.  The \textit{hard limit} (of $m^{(k)}_n$) in the particle $\ell$ is the leading-order coefficient in its Laurent expansion in $\lambda$ as $\lambda \rightarrow0$.  We define $\partial_{\ell_1}(X_J) = X^{\ell_1}_J(\lambda=0)$.

For example, for $(k,n) = (3,6)$, the hard limit in particle $1$ acts as follows:
\begin{eqnarray*}
	\partial_{1} (\mathfrak{s}_{123} = X_{236}) &=& X^{(1)}_{36},\\
	\partial_{1} (\mathfrak{s}_{234} = X_{134}) &=& 0,\\
	\partial_{1} (\mathfrak{s}_{156} = X_{146}) &=& X^{(1)}_{46},\\
	\partial_{1} (\mathfrak{s}_{123} + \mathfrak{s}_{124} + \mathfrak{s}_{134} + \mathfrak{s}_{234} = X_{346}) &=& X^{(1)}_{46},\\
	\partial_{1} (X_{246}) &=& X^{(1)}_{46}.
\end{eqnarray*}
where $X_{246}$ is one of the first surprises encountered in the study of CEGM amplitudes.  It has a somewhat un-enlightening expression indirectly in terms of the $\mathfrak{s}_{ijk}$'s, 
$$X_{246} = s_{156}+s_{256}+s_{345}+s_{346}+s_{356}+s_{456},$$
but the formula in Appendix \ref{sec: two formulas for k propagators} provides more insight into its structure:
$$X_{246} = X_{(12_1 34_1 56_1)}.$$
Here, we have an over all 3-fold cyclic symmetry and a flip symmetry in each block, for example
$$X_{(12_1 34_1 56_1)} = X_{(12_1 34_1 65_1)} = X_{(56_1 12_1 34_1)} = \cdots,$$
reflecting that the residue $X_{(12_1 34_1 56_1)}=0$ of $m^{(3)}_6$ is a triple product of three 4-point amplitudes.

Observe that all of the following inverse $3$-propagators are identically zero in the hard limit $\partial_{1_1}$:
\begin{eqnarray}\label{eq: hard limit 1}
	X_{134},X_{145},X_{245},X_{256},X_{356}\mapsto 0,
\end{eqnarray}
and from Equation \eqref{eq: generic GNLSM Intro} we see that $\mathcal{A}^{(3),\sigma}_{6}$ remains finite when Equation \eqref{eq: hard limit 1} is imposed.

Let us now compute the hard limit of $\mathcal{A}^{(3),\sigma}_6$ in two steps for pedagogical purposes.  We first impose Equation \eqref{eq: hard limit 1} and then apply $\partial_{1_1}$ explicitly.  We first observe that $\mathcal{A}^{(3),\sigma}_6$ remains finite when $\lambda=0$, so we can simply apply $\partial_{1_1}$ to the whole expression and simplify, getting
\begin{eqnarray*}
	& & \partial_{1_1} \left(\mathcal{A}^{(3),\sigma}_{6}\right)\\
	& = & \partial_{1_1} \left(X_{136} X_{235}+X_{146} X_{236}+X_{124} X_{346}+X_{136} X_{346} - \frac{X_{136} \left(X_{146}+X_{235}\right) X_{236}}{-X_{135}+X_{136}+X_{235}}\right)\\
	& - & \partial_{1_1} \left(\frac{X_{146} \left(X_{124}+X_{236}\right) X_{346}}{X_{246}}\right)\\
	& = & X^{(1)}_{35} X^{(1)}_{36}+X^{(1)}_{46} X^{(1)}_{36}+ X^{(1)}_{46} X^{(1)}_{36} + X^{(1)}_{24} X^{(1)}_{46}- \left(X^{(1)}_{35}+X^{(1)}_{46}\right) X^{(1)}_{36}-\left(X^{(1)}_{24}+X^{(1)}_{36}\right) X^{(1)}_{46}\\
	& = & 0,
\end{eqnarray*}
having made use of identities such as 
$$\partial_{1_1}(X_{146})=\partial_{1_1}(X_{246}) = \partial_{1_1}(X_{346}) = X^{(1)}_{46},\ \ \partial_{1_1}(X_{135})=\partial_{1_1}(X_{235}) = X^{(1)}_{35},$$
and
$$\partial_{1_1}(X_{135})=\partial_{1_1}(X_{235}) = X^{(1)}_{35}.$$
Finally, since $m^{(3)}_6$ is sent to itself under the duality $m^{(k)}_{n} \mapsto m^{(n-k)}_n$ which also interchanges hard and soft boundaries $\partial_{\ell_0} \leftrightarrow \partial_{\ell_1}$, we conclude that the soft limit vanishes as well.

While we do not have substantial experimental evidence beyond $(k,n) = (3,6)$, it seems more likely than not that generic, pure kinematic deformations will have an analog of the Adler zero.

\begin{conjecture}
	For a generic, pure generalized NLSM $3$-amplitude, all single soft limits vanish identically.
\end{conjecture}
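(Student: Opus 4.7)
I would prove the conjecture by induction on $n$ among those $n$ with $3 \mid n$. The base case $n=6$ is essentially already established: the hard limit $\partial_{1_1}\mathcal{A}^{(3),\sigma}_6=0$ was verified explicitly in the text, and the corresponding soft-limit statement follows via the self-duality $m^{(3)}_6\mapsto m^{(3)}_6$ that interchanges $\partial_{\ell_0}$ and $\partial_{\ell_1}$. The inductive step combines three ingredients: (i) the action of $\partial_{\ell_0}$ on the planar $X$-basis, (ii) the factorization of $\mathcal{A}^{(3),\sigma}_n$ on residues of the underlying $m^{(3)}_n$, and (iii) the zero-preservation property of pure kinematic shifts established in Proposition \ref{prop: propagators XJ deformation}.

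\textbf{Step 1 (cataloguing).} First I would apply $\partial_{\ell_0}$ term-by-term to the expansion of $\mathcal{A}^{(3),\sigma}_n$ in the $X$-basis. Using the soft analog of the identity $X^{(\ell_1)}_{J'}\equiv 0$ for $J'$ a cyclic interval in $\lbrack n\rbrack\setminus\{\ell\}$, many polynomial terms $X_I X_J$ and pole terms $N_J/X_J$ vanish outright. The combinatorial task is to identify, for a generic pure $\sigma$, exactly which pairs $(I,J)$ survive; this uses the compatibility criteria of Theorem \ref{thm: properties eta} together with the decorated-ordered-set-partition description $X_{(\mathbf{S},\mathbf{r})}$, which is essential for handling ``nested'' $X$-variables such as $X_{246}=X_{(12_1\,34_1\,56_1)}$.

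\textbf{Step 2 (residue factorization and induction).} For each surviving pole $X_J$, the residue $\text{Res}_{X_J=0}(\mathcal{A}^{(3),\sigma}_n)$ factorizes as a product of lower-point generalized NLSM amplitudes $\mathcal{A}^{(3),\sigma_L}_{n_L}\cdot\mathcal{A}^{(3),\sigma_R}_{n_R}$, where the restricted shifts $\sigma_L,\sigma_R$ are themselves pure and, for generic $\sigma$, generic on each factor. Since $\partial_{\ell_0}$ commutes with the residue operation whenever $\ell$ lies on only one side of the factorization, the inductive hypothesis yields vanishing of the corresponding factor, hence of the residue. Summing over all remaining residues and combining with vanishing of the pole-free part from Step 1 closes the induction. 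As an independent check I would use the residual embedding of Section \ref{sec: mixed}, which relates particular soft limits on the generalized side to the ordinary NLSM Adler zero.

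\textbf{Main obstacle.} The principal difficulty lies in Step 2: showing that the restriction of a generic pure shift remains genuinely generic on \emph{every} factor of \emph{every} surviving residue requires a careful case analysis based on how $J$ partitions $\lbrack n\rbrack$, and verifying that the polynomial (pole-free) part cancels cleanly against leftover non-residue terms. A secondary difficulty is that no explicit formula analogous to \eqref{eq: generic GNLSM Intro A} is available for $n>6$, so the induction must be driven by structural properties (planarity, compatibility, cyclic symmetry) rather than by direct expansion. Finally, one must accommodate cases where $n_L$ or $n_R$ is not divisible by $3$: in that case $\sigma_L$ or $\sigma_R$ is not a pure $(3,n')$ shift in the strict sense, and one needs either a mixed-amplitude extension of the soft-limit statement, or an argument that the factor $\mathcal{A}^{(3),\sigma_\bullet}_{n_\bullet}$ is identically zero to begin with.
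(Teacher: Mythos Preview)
The statement you are attempting to prove is presented in the paper as an open \emph{conjecture}, not a theorem: the paper explicitly says it does ``not have substantial experimental evidence beyond $(k,n)=(3,6)$'' and offers no proof. The only established case is $(3,6)$, via the explicit hard-limit computation and the duality argument you correctly cite. So there is no ``paper's own proof'' against which to compare; your proposal is an attempted attack on an open problem.

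On the merits, your outline has a genuine structural gap in Step~2. You assume that each residue $\text{Res}_{X_J=0}(\mathcal{A}^{(3),\sigma}_n)$ factorizes as a product $\mathcal{A}^{(3),\sigma_L}_{n_L}\cdot\mathcal{A}^{(3),\sigma_R}_{n_R}$ of lower-point \emph{generalized NLSM $3$-amplitudes}. The paper establishes no such factorization. Even at the level of the undeformed $m^{(3)}_n$, residues on a single $X_J$ are in general not products of two smaller $m^{(3)}$'s; depending on the decorated ordered set partition $(\mathbf{S},\mathbf{r})$ attached to $J$, one can obtain products of $m^{(2)}$ factors, triple products (as in the $X_{246}$ residue of $m^{(3)}_6$), or more intricate objects. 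Without a factorization theorem of the specific shape you invoke, the inductive engine does not turn over.

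The obstacle you flag yourself is in fact fatal as stated: by the dimension formula $\dim(\text{pure shifts})=\gcd(k,n)-1$, whenever a factor has $n'$ with $\gcd(3,n')=1$ the space of pure shifts on that factor is zero-dimensional, so there is no ``generic pure'' $\sigma_\bullet$ to restrict to and no inductive hypothesis to apply. Your suggested escape hatches (mixed-amplitude extensions, or showing the factor vanishes identically) are themselves open problems at least as hard as the conjecture. Finally, the claim that $\partial_{\ell_0}$ commutes with taking residues, and that the pole-free polynomial part cancels ``cleanly'' against leftover terms, are asserted but not argued; in the $(3,6)$ case the paper's verification of this cancellation is a direct computation with no visible structural mechanism that would generalize.
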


\section{Discussions and Directions For Future Work}\label{sec: discussions}

In this paper, we have introduced a systematic deformation theory for CEGM amplitudes, developed zero-preserving deformations that produce generalized NLSM amplitudes.  We proved that the NLSM emerges as a certain residues of a mixed $(3,n+2)$ CEGM NLSM amplitude.

Our computations of soft and hard limits further reveal that pure deformations have a nontrivial structure and we found evidence of an analog of the Adler zero, wherein the amplitude vanishes when one particle is soft.  We find that the space of zero-preserving deformations of CEGM amplitudes have a much richer structure than what is present for $k=2$.

For $\text{Tr} (\phi^3)$, the ``color'' $U(N)$ degrees of freedom play essentially only an organizational role in the amplitudes, telling us to work with ``ordered'' amplitudes. For the NLSM, the unitary group $U(N)$ plays a much deeper role: physics tells us that the pions are the Goldstone bosons of the $U(N) \times U(N) \to U(N)$ symmetry breaking pattern, and this is reflected in the properties of the soft and multi-soft limits.  Of course, for gluons, the $U(N)$ symmetry is even deeper still -- it determines the Yang-Mills gauge symmetry crucially underpinning the dynamics, and is reflected in the properties of the amplitudes in a myriad of ways, largely determining the amplitudes.  It is a fascinating question to explore how these incarnations of the $U(N)$ group structures are generalized in CEGM.

The CEGM construction has provided a powerful new apparatus for studying scattering amplitudes, and has inspired many new developments and connections between scattering amplitudes and many other research areas.  There has been a substantial amount of work; let us try to provide a representative list of topics which provide a roadmap for how the subject has grown and how it may continue to evolve.  These research topics include: configuration spaces \cite{Early:2019xbh}, combinatorial approaches to leading singularities associated to non-planar on-shell diagrams \cite{Cachazo:2018wvl}, cluster algebras \cite{Drummond:2019qjk,Ren:2021ztg,Henke:2021ity}, algebraic statistics \cite{Sturmfels:2020mpv,Agostini:2021rze}, amplituhedra \cite{Lukowski:2020dpn}, quantum affine algebras \cite{Early:2023tjj}, and new phenomena and research areas such as minimal kinematics \cite{Cachazo:2020uup,Early:2018zuw,Early:2024nvf,Arkani-Hamed:2023swr,Arkani-Hamed:2024fyd}, smooth splits and split kinematics \cite{Cachazo:2021wsz,Zhang:2024iun,GimenezUmbert:2025ech}, positive del Pezzo geometry \cite{Early:2023cly}, applications to $\mathcal{N}=4$ SYM \cite{Arkani-Hamed:2019rds}, Grassmannian string integrals \cite{He:2020ray,Arkani-Hamed:2019mrd} and spinor-helicity varieties \cite{Maazouz:2024qmm}, and compatibility complexes for Minkowski sums of alcoved polytopes \cite{Early:2025hqs}.  

We are steadily digging deeper and deeper into these structures, revealing a generalization of QFT that we are seeing emerging from CEGM amplitudes, suggesting an actual connection to real-world physics.  Many fascinating questions remain to be explored.  In future work we will extend our study of the CEGM NLSM to $k\ge 4$, revealing an interconnected web of theories.  We expect significant progress on these issues in the next year.

\section*{Acknowledgements}
We thank Nima Arkani-Hamed for discussions and valuable comments on the draft.  We also thank Carolina Figueiredo, Alex Fink, Hadleigh Frost, Alfredo Guevara, Thomas Lam, Sebastian Mizera, Hugh Thomas and Bruno Umbert for discussions and helpful comments, and especially Freddy Cachazo for enlightening discussions during the early stages of the work.  We are grateful to Perimeter Institute for Theoretical Physics for excellent working conditions while this project was initiated.  The author was funded by the European Union (ERC, UNIVERSE PLUS, 101118787). Views and opinions expressed are however those of the author(s) only and do not necessarily reflect those of the European Union or the European Research Council Executive Agency. Neither the European Union nor the granting authority can be held responsible for them.

\appendix

\section{Parametrization of $X^+(k,n)$}\label{sec:positive parametrization}
In this Appendix we construct a parameterization of the positive configuration space $X^+(k,n)$.  Substituting $x_{i,j} \mapsto \prod_{t=0}^j x_{i,t-1}$ and setting $x_{i,0}=1$ relates our parameterization to the standard web parameterization.  

We first define a $(k-1)\times (n-k)$ polynomial-valued matrix $M_{k,n}$ with entries $m_{i,j}(x)$, with $(i,j) \in \lbrack 1,k-1\rbrack \times \lbrack 1,n-k\rbrack$, defined by 
\begin{eqnarray*}
	m_{i,j}(\{x_{a,b}: (a,b) \in \lbrack i,k-1\rbrack \times \lbrack 1,j\rbrack\}) & = & (-1)^{k-i}\sum_{1\le b_i\le b_{i+1}\le \cdots \le b_{k-1}\le j}\left(x_{i,b_{i}}x_{i+1,b_{i+1}}\cdots x_{k-1,b_{k-1}}\right).
\end{eqnarray*}

For instance, fixing $(k,n) = (4,n)$ then 
\begin{eqnarray*}
	m_{1,2} & = & -(x_{1,1} x_{2,1} x_{3,1}+x_{1,1} x_{2,1} x_{3,2}+x_{1,1} x_{2,2} x_{3,2}+x_{1,2} x_{2,2} x_{3,2})\\
	m_{2,3} & = & x_{2,1} x_{3,1}+x_{2,1} x_{3,2}+x_{2,2} x_{3,2}+x_{2,1} x_{3,3}+x_{2,2} x_{3,3}+x_{2,3} x_{3,3}\\
	m_{3,4} & = & -(x_{3,1}+x_{3,2}+x_{3,3}+x_{3,4}).
\end{eqnarray*}
For sake of brevity, in what follows we denote $x_{i,S} = \sum_{j\in S}x_{i,j}$ for any $S\subseteq \lbrack 1,n-k\rbrack$.

We construct a $k\times n$ matrix $M$ with $M_{k,n}$ as its upper right $(k-1)\times (n-k)$ block:
\begin{eqnarray}\label{eq: pos parametrization}
	M & = & \begin{bmatrix}
		1 &  &  & 0 & m_{1,1}& \cdots  & m_{1,n-k} \\
		& \ddots &  &  &\vdots & \ddots & \vdots \\
		&  & 1 &  & m_{k-1,1} & & m_{k-1,n-k} \\
		0&  &  & 1 & 1 & \cdots & 1 \\
	\end{bmatrix}.
\end{eqnarray}
For instance, for $k=3$ we have
\begin{eqnarray}
	M_{3,6} & = & \begin{bmatrix}
		m_{1,1} & m_{1,2} & m_{1,3}\\
		m_{2,1} & m_{2,2} & m_{2,3}
	\end{bmatrix}\\	
	& = & \begin{bmatrix}
		x_{1,1}x_{2,1} & x_{1,1}x_{2,12}+x_{1,2}x_{2,2}  & x_{1,1}x_{2,123} + x_{1,2}x_{2,23} + x_{1,3}x_{2,3} \\
		-x_{2,1} & -x_{2,12} & -x_{2,123}\nonumber
	\end{bmatrix}
\end{eqnarray}
and for the embedding we have
$$M = \begin{bmatrix}
	1 & 0 & 0 & x_{1,1}x_{2,1} & x_{1,1}x_{2,12}+x_{1,2}x_{2,2} & x_{1,1}x_{2,123} + x_{1,2}x_{2,23} + x_{1,3}x_{2,3} \\
	0& 1 & 0 & -x_{2,1} & -x_{2,12} & -x_{2,123}\\
	0 & 0 & 1 & 1 & 1 & 1 
	\nonumber
\end{bmatrix}.$$

\bibliographystyle{jhep}
\bibliography{references}
\end{document}